\documentclass[a4paper, 11pt]{article}

\usepackage[arxiv]{optional}
\usepackage[english]{babel}
\usepackage[T1]{fontenc}
\usepackage{amsmath}
\opt{arxiv}{%
\usepackage{amsthm}
}%
\usepackage{amsfonts}
\usepackage{amssymb}
\usepackage[final]{graphicx}
\usepackage{paralist}
\opt{draft}{%
\usepackage{refcheck}
}

\opt{arxiv}{%
\theoremstyle{plain}
\newtheorem{thm}{Theorem}[section]
\newtheorem{prop}[thm]{Proposition}
\newtheorem*{prop*}{Proposition}
\newtheorem{lem}[thm]{Lemma}
\newtheorem*{lem*}{Lemma}
\newtheorem{cor}[thm]{Corollary}
\newtheorem*{cor*}{Corollary}

\newtheorem*{example*}{Example}

\newtheorem*{conject*}{Conjecture}

\theoremstyle{definition}

\newtheorem*{defn*}{Definition}
\newtheorem{rem}[thm]{Remark}
\newtheorem*{rem*}{Remark}
}%
\opt{springer}{%

}%

\newcommand{\eps}{\varepsilon}


\DeclareMathOperator{\crit}{cr}


\newcommand{\Msun}{\textrm{M}_{\odot}}

\opt{springer}{%
\newcommand{\eop}{\qed}
}
\opt{arxiv}{%
\newcommand{\eop}{}
}



\newcommand{\R}{\ensuremath{\mathbb{R}}}





\newcommand{\defby}{\mathrel{\mathop:}=}

\newcommand{\bydef}{=\mathrel{\mathop:}}

\newcommand{\coloneq}{\mathrel{\mathop:}=}

\newcommand{\conj}[1]{\overline{#1}}

\newcommand{\rpert}{\ensuremath{R_{\varepsilon}}}


\title{Sharp parameter bounds for certain maximal point
lenses} \date{April 3, 2014}

\author{Robert Luce, Olivier S\`{e}te, J\"org Liesen\footnote{TU
Berlin, Institute of Mathematics, MA 4-5, Stra{\ss}e des 17. Juni 136,
10623 Berlin, Germany (\texttt{\{luce,sete,liesen\}@math.tu-berlin.de})}}

\begin{document}
\maketitle

\begin{abstract}
Starting from an $n$-point circular gravitational lens having $3n+1$
images, Rhie (2003) used a perturbation argument to construct an $(n+1)$-point
lens producing $5n$ images. In this work we give a concise proof of
Rhie's result, and we extend the range of parameters in Rhie's model
for which maximal lensing occurs.

We also study a slightly different construction given by Bayer and Dyer (2007)
arising from the $(3n+1)$-point lens. In particular, we extend their results and give
sharp parameter bounds for their lens model.  
By a substitution of variables and parameters we show
that both models are equivalent in a certain sense.

\end{abstract}

\section{Introduction}

Gravitational lensing is the deflection of light due to masses located
between a light source and an observer, which can produce multiple
images of the light source.  Standard references describing
observations, theory and applications of gravitational lensing include
~\cite{PettersLevineWambsganss2001,SchneiderEhlersFalco1999,Wambsganss1998}.

In this paper we focus on an aspect in the theory of gravitational microlensing,
namely the question of \emph{maximal lensing}, from a mathematical point of
view.  We consider the case of a single lens plane containing $n$
point masses,
and no external shear.  Further we assume that the light source is located on the
optical axis from the observer through the origin of the lens plane (the center
of mass of the point masses).  As described, e.g.,
in~\cite{MaoPettersWitt1999,KhavinsonNeumann:2006},
the lens equation
can be written in this case (using complex numbers) as
\begin{equation*}
0 = z_s = z - \sum_{k=1}^n \tfrac{m_k}{\conj{z}-\conj{z}_k},
\end{equation*}
where $z_s$ is the position of the light source projected on the lens plane,
and $m_k$ is the mass of the deflector at the point $z_k$ in the lens plane.
We point out that several other lens models exist; see, e.g., the recent
surveys~\cite{Petters2010,PettersWerner2010} or~\cite{DalalRabin2001} 
(in particular Table~1 which gives an overview of six different models).

Mathematically,
the number of lensed images of the source in the above lens equation
is equal to the number of (complex) solutions of
\begin{equation*}
\sum_{k=1}^n \tfrac{m_k}{z-z_k}=\conj{z}.
\end{equation*}
An important special case of this equation, which we consider in this
paper, is the one of $n\geq 3$ equal masses that are located at the vertices of
a regular polygon in the lens plane with total mass normalized to unity, i.e.,
we set $m_k = \frac{1}{n}$ and $z_k = r e^{i \frac{2 k \pi}{n}}$ for some real
number $r>0$ and $k = 0, 1, \ldots, n-1$.  The corresponding lens equation,
\begin{equation}
    \label{eqn:unpert_rat_fun}
    R(z)=\conj{z},\quad\mbox{where}\quad R(z) \defby
    \tfrac{z^{n-1}}{z^n - r^n},
\end{equation}
has been studied in several publications;
see~\cite[section~2]{PettersWerner2010} or~\cite[sections 2,3]{FassnachtKeetonKhavinson2009} for
overviews with many pointers to the literature. In particular,
Mao, Petters, and Witt~\cite{MaoPettersWitt1999} showed that this
equation
has $3n+1$ solutions if
\begin{equation*}
r<r_{\crit} \defby \big( \big( \tfrac{n-2}{n} \big)^{\frac{n-2}{2}} -
\big(
\tfrac{n-2}{n} \big)^{\frac{n}{2}} \big)^{\frac{1}{n}}
= \big( \tfrac{n-2}{n} \big)^{\frac{1}{2}} \big( \tfrac{2}{n-2}
\big)^{\frac{1}{n}},
\end{equation*}
where $r_{\crit}$ is known as the {\em critical radius}. Note that
$r_{\crit}\approx 0.7274$ for
$n=3$, $r_{\crit}=1/\sqrt{2}\approx 0.7071$ for $n=4$, and $r_{\crit}$
is
strictly monotonically increasing and bounded from above by $1$ for
$n\geq 4$.

An important result of Khavinson and Neumann~\cite{KhavinsonNeumann:2006} states
that a rational harmonic function $f(z)-\conj{z}$, with $f(z)$ rational and of
degree $n$, may have at most $5(n-1)$
complex zeros (see also~\cite{KhavinsonNeumann2008}). Consequently, any
$n$-point lens may have at most $5(n-1)$ images. The lens modeled
by $R(z)$ from \eqref{eqn:unpert_rat_fun} and with $r<r_{\crit}$ attains this
upper bound for the
maximal number of images only for $n=3$.

Rhie~\cite{Rhie:2003} considered perturbations of $R(z)$ of the form
\begin{equation}
    \label{eqn:rhie_perturb}
    R_\eps(z) = (1-\eps) R(z) + \tfrac{\eps}{z} =
    \tfrac{ z^n - \eps r^n }{ z (z^n - r^n) },
\end{equation}
and indicated that for $n \ge 3$, $0 < \eps < 1$ ``small enough'', and
\begin{equation}\label{eqn:rhie_val}
    r  \defby (n-1)^{-\frac{1}{n}} \big( \tfrac{n-1}{n} \big)^{\frac{1}{2}} <
r_{\crit}
\end{equation}
the equation $R_\eps(z) = \conj{z}$ has $5n$ solutions. Since $R_\eps(z)$ is of
degree $n+1$,
Rhie's result implies the existence of maximal point lenses for any $n \geq 3$.

In section~\ref{sec:rhie} we reconsider Rhie's original construction
based on the function~\eqref{eqn:rhie_perturb}, and we extend Rhie's
result from the special value $r$ in \eqref{eqn:rhie_val} to all $r
\in (0,r_{\crit})$.  Further we discuss bounds on the eligible weight $\eps$ such
that the equation $R_\eps(z) = \conj{z}$ describes a maximal point lens.  In
particular, we derive a sharp bound.

In a related work, Bayer and Dyer~\cite{BayerDyer2007,BayerDyer2009}
intended to close some gaps in the understanding of maximal lensing
that remained after~\cite{Rhie:2003}. For $n\geq 3$ and $r\in
(0,1/\sqrt{2})$ they derived an upper bound $m_\ast>0$, so that for
any $\eps\in (0,m_\ast)$ the equation
\begin{equation}
\label{eqn:bdfun}
R(z)+\tfrac{\eps}{z}=\conj{z},
\end{equation}
has $5n$ solutions and thus models a maximal lensing case.

In section~\ref{sec:bd} we extend their result to a range of radii
that is strictly larger than the interval $(0, r_{\crit})$, and we give
sharp bounds for the mass $\eps$.
In the maximal lensing case for the function~\eqref{eqn:rhie_perturb}, the
maximal eligible mass $\eps$ can be expressed as a function of the radius $r$.
This is no longer the case for the function on the left hand side of
equation~\eqref{eqn:bdfun}, where these two parameters
show a more complex interdependence.
Despite this interdependence, we present a parameter transformation that
allows for an exact characterization of the maximal lensing case for the latter function.

\section{A concise proof of Rhie's construction}
\label{sec:rhie}

The following result was already shown by Mao, Petters, and Witt~\cite{MaoPettersWitt1999}.
We include a proof for completeness, and because a part of this proof will be used to prove
one of our main results below.

\begin{prop} \label{prop:num_sol}
Let $n \geq 3$ be given.  Depending on the parameter $r$, the
solutions to the equation $R(z) = \conj{z}$ with $R(z)$
from~\eqref{eqn:unpert_rat_fun} are given as follows:
\begin{compactitem}
\item If $0<r<r_{\crit}$, the equation has exactly $3n+1$ solutions
given by $z = 0$ and $3n$ values of $z$ of the form $r_1 e^{i
\frac{(2k+1) \pi}{n}}$, $r_2 e^{i \frac{(2k+1) \pi}{n}}$ and $r_3
e^{i \frac{2 k \pi}{n}}$, $k=0,1,\dots,n-1$, where $0 < r_1 < r_2 < 1
< r_3$.

\item If $r = r_{\crit}$, the equation has exactly $2n+1$
solutions given by $z = 0$, $( \tfrac{n-2}{n} )^{\frac{1}{2}} e^{
i \frac{(2k+1) \pi}{n}}$, and $r_3 e^{i \frac{2k \pi}{n}}$,
$k=0,1,\dots,n-1$, where $r_3>1$.

\item If $r > r_{\crit}$, the equation has exactly $n+1$ solutions
given by $z = 0$ and $r_3 e^{i \frac{2k \pi}{n}}$,
$k=0,1,\dots,n-1$, where $r_3>1$.
\end{compactitem}
\end{prop}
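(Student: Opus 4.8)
The plan is to collapse the two–real–dimensional equation $R(z)=\conj{z}$ into a single polar identity that simultaneously pins down the admissible arguments of $z$ and reduces each allowed ray to a one–variable problem. First I would note that $z=0$ is manifestly a solution, and then for $z\neq0$ rewrite the equation as $z^{n-1}=\conj{z}\,(z^{n}-r^{n})$. Using $\conj{z}\,z^{n}=\abs{z}^{2}z^{n-1}$, this becomes $z^{n-1}(\abs{z}^{2}-1)=r^{n}\conj{z}$. Writing $z=\rho e^{i\theta}$ with $\rho>0$ and dividing by $\rho e^{-i\theta}$, I obtain the clean identity
\begin{equation*}
\rho^{\,n-2}(\rho^{2}-1)\,e^{in\theta}=r^{n}.
\end{equation*}

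The decisive observation is that the right–hand side is real and strictly positive, while $\rho^{n-2}>0$, so the argument of the left–hand side is controlled entirely by $e^{in\theta}$ together with the sign of $\rho^{2}-1$. If $\rho>1$ the real factor is positive, forcing $e^{in\theta}=1$, i.e.\ $\theta=\tfrac{2k\pi}{n}$; if $0<\rho<1$ the factor is negative, forcing $e^{in\theta}=-1$, i.e.\ $\theta=\tfrac{(2k+1)\pi}{n}$; and $\rho=1$ is impossible since it makes the left–hand side vanish. Thus every nonzero solution sits on an even ray with radius exceeding $1$ or on an odd ray with radius below $1$, and no other arguments can occur. (The rotational symmetry $z\mapsto e^{2\pi i/n}z$, under which $R(z)=\conj{z}$ is invariant, explains a priori why the solutions arrange into orbits of size $n$ along these rays.)

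It then remains to count radii on each ray. On the even rays the identity reduces to $g(\rho)\defby\rho^{n-2}(\rho^{2}-1)=r^{n}$; since $g(1)=0$, $g$ is strictly increasing on $(1,\infty)$, and $g(\rho)\to\infty$, there is exactly one root $r_{3}>1$ for every $r>0$, giving the $n$ solutions $r_{3}e^{i2k\pi/n}$. On the odd rays it reduces to $h(\rho)\defby\rho^{n-2}(1-\rho^{2})=r^{n}$ on $(0,1)$; here $h(0)=h(1)=0$ and $h$ is unimodal with a single interior maximum at $\rho^{\ast}=(\tfrac{n-2}{n})^{1/2}$, where a short computation gives $h(\rho^{\ast})=(\tfrac{n-2}{n})^{(n-2)/2}\tfrac{2}{n}=r_{\crit}^{\,n}$. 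Hence $h(\rho)=r^{n}$ has two roots $0<r_{1}<\rho^{\ast}<r_{2}<1$ when $r<r_{\crit}$, a single double root $\rho^{\ast}$ when $r=r_{\crit}$, and none when $r>r_{\crit}$. Adding $z=0$ to the $n$ even–ray solutions and the $2n$, $n$, or $0$ odd–ray solutions yields the totals $3n+1$, $2n+1$, and $n+1$ claimed in the three cases.

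The only genuinely delicate point is the opening manipulation: everything hinges on recognizing that the equation can be massaged into the product form $\rho^{n-2}(\rho^{2}-1)e^{in\theta}=r^{n}$, after which the separation of modulus from argument is automatic and the problem becomes elementary. The remaining ingredients—monotonicity of $g$ and unimodality of $h$ together with the identification of its maximum with $r_{\crit}^{\,n}$—are routine single–variable calculus, and the final bookkeeping only requires observing that the even and odd rays are disjoint, so that the three counts simply add.
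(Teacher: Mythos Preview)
Your proof is correct and follows essentially the same approach as the paper: both reduce to the polar identity $\rho^{n-2}(\rho^{2}-1)e^{in\theta}=r^{n}$, split into the cases $e^{in\theta}=\pm 1$, and count radii via one-variable analysis. The only cosmetic difference is that the paper invokes Descartes' rule of signs for the $e^{in\theta}=1$ case where you argue directly by monotonicity of $g$ on $(1,\infty)$.
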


\begin{proof}
Obviously, $z=0$ solves $R(z) = \conj{z}$, regardless of the value of $r$.  To derive necessary and
sufficient conditions for nonzero solutions we use polar coordinates and write $z = \rho e^{i \varphi}$
with $\rho > 0$ and $\varphi \in \R$. Then a small manipulation shows that $R(z) = \conj{z}$ can be
written as
\begin{equation}
\label{eqn:comp_rho_phi}
( \rho^2 - 1 ) \rho^{n-2} e^{i n \varphi} = r^n.
\end{equation}
Since $r>0$, it is necessary that $\rho \neq 1$ and that $e^{i n \varphi}$ is
real.
Thus we must have either $e^{i n \varphi} = 1$ and hence $\rho > 1$,
or $e^{i n \varphi} = -1$ and hence $0 < \rho < 1$.

First consider the case $e^{i n \varphi} = 1$.  Then $\varphi = \tfrac{2 k \pi}{n}$
for some $k\in\{0,1,\dots,n-1\}$, and~\eqref{eqn:comp_rho_phi} becomes
$f_{+}(\rho)\defby \rho^n - \rho^{n-2} - r^n = 0$.
By Descartes' rule of signs, the polynomial $f_{+}(\rho)$ has exactly one
positive root, which
we denote by
$r_3>1$. Hence for all $r>0$ there exist $n$ solutions $r_3 e^{i
\frac{2 k \pi}{n}}$,
$k=0,1,\dots,n-1$, of the equation $R(z)=\conj{z}$.

Next consider the case $e^{i n \varphi} = -1$. Then $\varphi =
\tfrac{(2k+1)\pi}{n}$ for some
$k\in\{0,1,\dots,n-1\}$, and~\eqref{eqn:comp_rho_phi} becomes
$f_-(\rho) \coloneq \rho^n - \rho^{n-2} + r^n=0$.
The polynomial $f_-(\rho)$ has either zero or two positive roots. The only positive root of the derivative
$f_-'(\rho) = \rho^{n-3} ( n \rho^2 - (n-2) )$
is given by $\big( \tfrac{n-2}{n} \big)^{\frac{1}{2}}\in (0,1)$. Since $f_-(0) = f_-(1) = r^n > 0$,
the function $f_-(\rho)$ has two distinct roots $r_1,r_2\in (0,1)$ if and only if
\begin{equation*}
f_- \big( \big( \tfrac{n-2}{n} \big)^{\frac{1}{2}} \big) < 0, \quad\mbox{or, equivalently,}
\quad
r < \big( \big( \tfrac{n-2}{n} \big)^{\frac{n-2}{2}} - \big( \tfrac{n-2}{n} \big)^{\frac{n}{2}} \big)^{\frac{1}{n}}
= r_{\crit}.
\end{equation*}
Thus, for $r \in (0,r_{\crit})$ there are two positive roots
$0<r_1<r_2<1$ of $f_-(\rho)$, giving
$2n$ solutions $r_1 e^{i \frac{(2k+1)\pi}{n}}$ and $r_2 e^{i \frac{(2k+1)\pi}{n}}$, $k=0,1,\dots,n-1$,
of $R(z) = \conj{z}$.
For $r = r_{\crit}$, $\rho = (\tfrac{n-2}{n})^{\frac{1}{2}}$ is a (double)
positive root of $f_-(\rho)$, giving $n$ solutions
$(\tfrac{n-2}{n})^{\frac{1}{2}} e^{i \frac{(2k+1) \pi}{n}}$, $k = 0, 1, \ldots,
n-1$.
For $r > r_{\crit}$, there are no additional solutions of $R(z) = \conj{z}$.
\eop
\end{proof}

With an analogous but somewhat more technical proof we now show that in Rhie's
construction~\eqref{eqn:rhie_perturb}, maximal lensing occurs not only for $r$
from~\eqref{eqn:rhie_val}, but for \emph{all} $r \in (0,r_{\crit})$.

\begin{thm}
\label{thm:1}
Let $n \geq 3$ and $r \in (0, r_{\crit})$ be given. Denote by $\xi_1$
the smallest positive root of the polynomial $(n+2) \xi^n - n
\xi^{n-2} + 2 r^n$ and set
\begin{equation}
\label{eqn:eps_star}
\eps_\ast \defby \tfrac{\xi_1^{n+2} - \xi_1^n +
r^n \xi_1^2}{r^n}.
\end{equation}
Then $\eps_\ast \in (0,1)$, and for any $\eps\in (0,\eps_\ast)$,
the equation $\rpert(z) = \conj{z}$ has
$5n$ solutions of the form $r_2 e^{i \frac{(2k+1)\pi}{n}}$, $r_3 e^{i
\frac{(2k+1)\pi}{n}}$, $r_4 e^{i \frac{(2k+1)\pi}{n}}$, and $r_1 e^{i
\frac{2k\pi}{n}}$, $r_5 e^{i \frac{2k\pi}{n}}$, $k=0,1,\dots,n-1$,
where $0 < r_1 < \sqrt{\eps} < r_2 < r_3 < r_4 < 1 < r_5$.
If $\eps = \eps_*$ there exist only $4n$ solutions, and only $3n$ solutions
exist for $\eps > \eps_\ast$.
\end{thm}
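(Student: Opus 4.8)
The plan is to mirror the proof of Proposition~\ref{prop:num_sol}: pass to polar coordinates, reduce $\rpert(z)=\conj{z}$ to real polynomial equations in $\rho$ for the two admissible families of angles, and count positive roots. First I would clear denominators. Writing $z=\rho e^{i\varphi}$ with $\rho>0$ (note that $z=0$ is now a \emph{pole} of $\rpert$, hence never a solution, in contrast to the unperturbed case), the equation $\rpert(z)=\conj{z}$ becomes $z^n-\eps r^n=\abs{z}^2(z^n-r^n)$, which in polar form reads
\[
 e^{in\varphi}\rho^n(1-\rho^2)=r^n(\eps-\rho^2).
\]
The right-hand side is real, and since $\eps<1$ forces $\rho\neq1$, the factor $\rho^n(1-\rho^2)$ is nonzero; hence $e^{in\varphi}$ must be real, i.e. $e^{in\varphi}=\pm1$. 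This splits the problem into the family $\varphi=\tfrac{2k\pi}{n}$, giving $g_+(\rho)\defby\rho^{n+2}-\rho^n-r^n\rho^2+\eps r^n=0$, and the family $\varphi=\tfrac{(2k+1)\pi}{n}$, giving $g_-(\rho)\defby\rho^{n+2}-\rho^n+r^n\rho^2-\eps r^n=0$, with $k=0,\dots,n-1$; each positive root contributes $n$ distinct solutions.

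For the $e^{in\varphi}=1$ family I would show that $g_+$ has exactly the two roots $r_1\in(0,\sqrt\eps)$ and $r_5\in(1,\infty)$ for every $\eps\in(0,1)$. Indeed $g_+(0)=\eps r^n>0$, $g_+(\sqrt\eps)=\eps^{n/2}(\eps-1)<0$, $g_+(1)=r^n(\eps-1)<0$, and $g_+(\rho)\to+\infty$; since $g_+'(\rho)=\rho\bigl((n+2)\rho^n-n\rho^{n-2}-2r^n\bigr)$ has a single positive zero (the bracket is $-2r^n<0$ at $0$, has one interior minimum, and tends to $+\infty$), $g_+$ is unimodal with one minimum, so this sign pattern forces exactly the two roots in the stated locations.

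The heart of the argument is the $e^{in\varphi}=-1$ family, where $g_-'(\rho)=\rho\,h(\rho)$ with $h(\rho)=(n+2)\rho^n-n\rho^{n-2}+2r^n$, precisely the polynomial from the statement; $h$ has its only positive minimum at $\rho^\ast=\sqrt{(n-2)/(n+2)}$ and satisfies $h(0)=2r^n>0$ and $h(1)=2+2r^n>0$. The crucial computation is to evaluate $h$ at $\rho_c\defby\sqrt{(n-2)/n}$ and verify the clean equivalence $h(\rho_c)<0\iff r<r_{\crit}$, which reduces, after substituting $\rho_c^2=(n-2)/n$, to $r^n<\tfrac{2}{n}\bigl(\tfrac{n-2}{n}\bigr)^{(n-2)/2}=r_{\crit}^n$. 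Under the hypothesis this makes $h$ negative somewhere, so $h$ has exactly two positive roots $\xi_1<\xi_2$, both in $(0,1)$ with $\rho_c\in(\xi_1,\xi_2)$; correspondingly $g_-$ rises from $g_-(0)=-\eps r^n<0$ to a local maximum at $\xi_1$ and falls to a local minimum at $\xi_2$. Using $r^n=\tfrac12\xi^{n-2}\bigl(n-(n+2)\xi^2\bigr)$ (valid at any root $\xi$ of $h$) I would compute $g_-(\xi)\big|_{\eps=0}=\tfrac12\xi^n\bigl(n(1-\xi^2)-2\bigr)$, so that $\xi_1<\rho_c$ gives $g_-(\xi_1)\big|_{\eps=0}>0$ and $\xi_2>\rho_c$ gives $g_-(\xi_2)\big|_{\eps=0}<0$. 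Since $\eps_\ast$ is exactly the value making $g_-(\xi_1)=0$, the first inequality yields $\eps_\ast>0$; moreover $\eps_\ast=\xi_1^2+\xi_1^n(\xi_1^2-1)/r^n<\xi_1^2<1$ because $\xi_1<1$, so $\eps_\ast\in(0,1)$ and $\sqrt\eps<\xi_1$ for all $\eps<\eps_\ast$. The second inequality guarantees $g_-(\xi_2)<0$ for every $\eps\ge0$.

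With these facts the counting is elementary sign analysis. For $\eps\in(0,\eps_\ast)$ one has $g_-(\sqrt\eps)=\eps^{n/2}(\eps-1)<0$, $g_-(\xi_1)>0$, $g_-(\xi_2)<0$, and $g_-(1)=r^n(1-\eps)>0$; combined with $\sqrt\eps<\xi_1$ this gives exactly three roots $r_2\in(\sqrt\eps,\xi_1)$, $r_3\in(\xi_1,\xi_2)$, $r_4\in(\xi_2,1)$, which together with $r_1,r_5$ produce the claimed $5n$ solutions in the order $0<r_1<\sqrt\eps<r_2<r_3<r_4<1<r_5$. At $\eps=\eps_\ast$ the local maximum satisfies $g_-(\xi_1)=0$, so $r_2,r_3$ coalesce into the double root $\xi_1$, leaving two distinct roots in this family and hence $4n$ solutions; for $\eps>\eps_\ast$ the maximum drops below the axis, leaving only $r_4\in(\xi_2,1)$ and thus $3n$ solutions. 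The main obstacle is the exact identity $h(\rho_c)<0\iff r<r_{\crit}$, which ties the perturbed threshold to the unperturbed critical radius, together with pinning down the signs of the two critical values $g_-(\xi_1),g_-(\xi_2)$ via the positions of $\xi_1,\xi_2$ relative to $\rho_c$; once these are established, the root counts and orderings follow routinely.
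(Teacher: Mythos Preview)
Your argument is correct and follows the same overall architecture as the paper: pass to polar coordinates, split into the two angle families, and reduce to counting positive roots of $g_\pm$ (the paper's $f_\pm$). The one substantive difference is how you control the critical values of $g_-$. The paper factors $f_-(\rho)=\rho^2 p(\rho)-\eps r^n$ with $p(\rho)=\rho^n-\rho^{n-2}+r^n$ the polynomial from Proposition~\ref{prop:num_sol}, whose two roots $z_1<z_2$ are already known for $r<r_{\crit}$; since $f_-(0)=f_-(z_1)=f_-(z_2)=-\eps r^n$, Rolle's theorem places $\xi_1\in(0,z_1)$ and $\xi_2\in(z_1,z_2)$, and the negativity of $f_-(\xi_2)$ follows immediately from $f_-(z_1)=f_-(z_2)<0$. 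You instead test $h$ at the auxiliary point $\rho_c=\sqrt{(n-2)/n}$, reducing $h(\rho_c)<0$ exactly to $r<r_{\crit}$, and then eliminate $r^n$ via $h(\xi)=0$ to obtain the closed form $g_-(\xi)\big|_{\eps=0}=\tfrac12\xi^n\bigl((n-2)-n\xi^2\bigr)$, whose sign is governed by the position of $\xi$ relative to $\rho_c$. Your route is self-contained and does not invoke Proposition~\ref{prop:num_sol}; the paper's route is shorter and more conceptual, recycling the unperturbed analysis through Rolle's theorem. A minor additional difference: for $g_+$ the paper appeals to Descartes' rule of signs, while you show directly that $g_+$ is unimodal by analyzing $g_+'$.
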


\begin{proof}
Clearly, $z=0$ does not solve $\rpert(z) = \conj{z}$. As in the proof of
Proposition~\ref{prop:num_sol}, we now write $z = \rho e^{i \varphi}$,
where $\rho > 0$ and $\varphi \in \R$. Then $\rpert(z) = \conj{z}$ can be
written as
\begin{equation}\label{eqn:comp_rho_phi_pert}
(1-\rho^2) \rho^n e^{i n \varphi} = (\eps - \rho^2) r^n.
\end{equation}
Since the right hand side of \eqref{eqn:comp_rho_phi_pert} is real, $r>0$ (and
$\eps\neq 1$), it is
necessary that $\rho\neq1$, $\rho\neq \sqrt{\eps}$, and that either $e^{i n \varphi} = 1$ or $e^{i n \varphi} = -1$.

If $e^{i n \varphi} = 1$, then $\varphi = \tfrac{2k\pi}{n}$ for some
$k\in\{0,1,\dots,n-1\}$. Furthermore, $1-\rho^2$ and $\eps-\rho^2$ must have same sign, hence
either $0 < \rho < \sqrt{\eps}$ or $1 < \rho$, and \eqref{eqn:comp_rho_phi_pert} can be
written as
$f_+(\rho) \coloneq \rho^{n+2} - \rho^n - r^n \rho^2 + \eps r^n = 0$.
By Descartes' rule of signs $f_+(\rho)$ has either zero or two positive roots.
From
\begin{equation*}
f_+(0) = \eps r^n > 0, \quad \quad f_+(\sqrt{\eps}) = \eps^{\frac{1}{n}} (\eps-1) < 0,\quad
f_+(1) = (\eps-1) r^n < 0,
\end{equation*}
and $f_+(\rho)\rightarrow\infty$ for $\rho\rightarrow\infty$, we see that $f_+(\rho)$
indeed has one root $r_1\in (0, \sqrt{\eps})$ and one root $r_5\in (1, \infty)$.
Consequently, for all $r>0$ there exist $2n$ solutions
$r_1 e^{i \frac{2 k \pi}{n}}$ and $r_5 e^{i \frac{2 k \pi}{n}}$, $k=0,1,\dots,n-1$,
of the equation $R_\eps(z)=\conj{z}$.

If $e^{i n \varphi} = -1$, then $\varphi = \tfrac{(2k+1)\pi}{n}$ for some
$k\in\{0,1,\dots,n-1\}$. Here $1-\rho^2$ and $\eps-\rho^2$ must have opposite signs. Hence
$\sqrt{\eps} < \rho < 1$ is necessary, and \eqref{eqn:comp_rho_phi_pert} can be written as
$f_-(\rho) \coloneq \rho^{n+2} - \rho^n + r^n \rho^2 - \eps r^n = 0.$
The polynomial $f_-(\rho)$ has either one or three positive roots.
We will derive necessary and sufficient conditions so that $f_-(\rho)$ has three distinct positive
roots in the interval $(\sqrt{\eps},1)$. We start by noting that the positive roots of the derivative
$f_-'(\rho) = \rho \big( (n+2) \rho^n - n \rho^{n-2} + 2 r^n \big)$
are equal to the positive roots of
$g(\rho) \coloneq (n+2) \rho^n - n \rho^{n-2} + 2 r^n.$
From
\begin{equation*}
g'(\rho) =
n \rho^{n-3} \big( (n+2) \rho^2 - (n-2) \big),
\end{equation*}
we see that the unique positive root of $g'(\rho)$ is
$(\frac{n-2}{n+2})^{\frac{1}{2}} \in (0,1)$.
For all $r\in (0,r_{\crit})$ we have $g((\frac{n-2}{n+2})^{\frac{1}{2}}) < 0$.
Together with $g(0) = 2 r^n > 0$ and $g(1) = 2 + 2 r^n > 0$ this shows that $g(\rho)$
and thus $f_-'(\rho)$ have exactly two positive roots, say $\xi_1$ and $\xi_2$,
with $0 < \xi_1 < (\frac{n-2}{n+2})^{\frac{1}{2}} < \xi_2 < 1$.
Note that $f_-'(\rho)$ does not depend on $\eps$, so $\xi_1$ and $\xi_2$ are independent of $\eps$.

Let us write
\begin{equation*}
f_-(\rho) = \rho^2 p(\rho) - \eps r^n, \quad \text{ where } \quad p(\rho) \coloneq \rho^n - \rho^{n-2} + r^n.
\end{equation*}
From the proof of Proposition~\ref{prop:num_sol} we know that for all $r\in
(0,r_{\crit})$ the polynomial
$p(\rho)$ has exactly two distinct positive roots, say $z_1$ and $z_2$, where $0 < z_1 < z_2 < 1$.
Since $f_-(0) = f_-(z_1) = f_-(z_2)=- \eps r^n$, the mean value theorem implies that the only two
roots of $f_-'(\rho)$ satisfy $0 < \xi_1 < z_1 < \xi_2 < z_2$.  From $f_-(z_2) < 0 < f_-(1)$ we then
see that $f_-(\rho)$ has exactly one root $r_4\in (z_2, 1)$.

Further, $f_-(0) = f_-(z_1) < 0$ implies that $f_-(\rho)$ has two more
(distinct) roots if and only if
\begin{equation*}
f_-(\xi_1) > 0, \quad\mbox{or, equivalently,} \quad \eps < \tfrac{ \xi_1^2
p(\xi_1) }{r^n} = \eps_\ast.
\end{equation*}
Note that $\eps_\ast > 0$ since $p(\rho)>0$ on $(0, z_1)$. Further, $p(\rho)$ is
decreasing on $(0, z_1)$,
so that $p(\xi_1) < p(0) = r^n$, and thus $\eps_\ast < 1$. In summary, for all
$r\in (0,r_{\crit})$ and $\eps \in (0, \eps_\ast)$,
the function $f_-(\rho)$ has two more distinct roots $r_2, r_3$ with $\sqrt{\eps} < r_2 < \xi_1 < r_3 < z_1$.
Hence, for all $r\in (0,r_{\crit})$ and $\eps \in (0, \eps_\ast)$ there exist
$3n$ solutions
$r_2 e^{ i \frac{(2k+1) \pi}{n}}$, $r_3 e^{ i \frac{(2k+1) \pi}{n}}$ and $r_4 e^{ i \frac{(2k+1) \pi}{n}}$,
$k=0,1,\dots,n-1$, of the equation $R_\eps(z) = \conj{z}$.

On the other hand, if $\eps = \eps_\ast$, $\xi_1$ is a (double) zero of
$f_-(\rho)$.  Then $\rpert(z) = \conj{z}$ has the $2n$ solutions $\xi_1 e^{i
\frac{(2k+1) \pi}{n}}$ and $r_4 e^{i \frac{(2k+1) \pi}{n}}$, $k = 0, 1, \ldots,
n-1$, in addition to the $2n$ solutions corresponding to $r_1$ and $r_5$.
Finally, if $\eps > \eps_\ast$, then only the $n$ solutions
$r_4 e^{ i \frac{(2k+1) \pi}{n}}$, $k=0,1,\dots,n-1$, of $\rpert(z) =
\conj{z}$ occur (in addition to the $2n$ solutions corresponding to $r_1$ and
$r_5$).
\eop
\end{proof}

\begin{figure}[p]
\begin{center}
    \includegraphics[width=.48\textwidth]{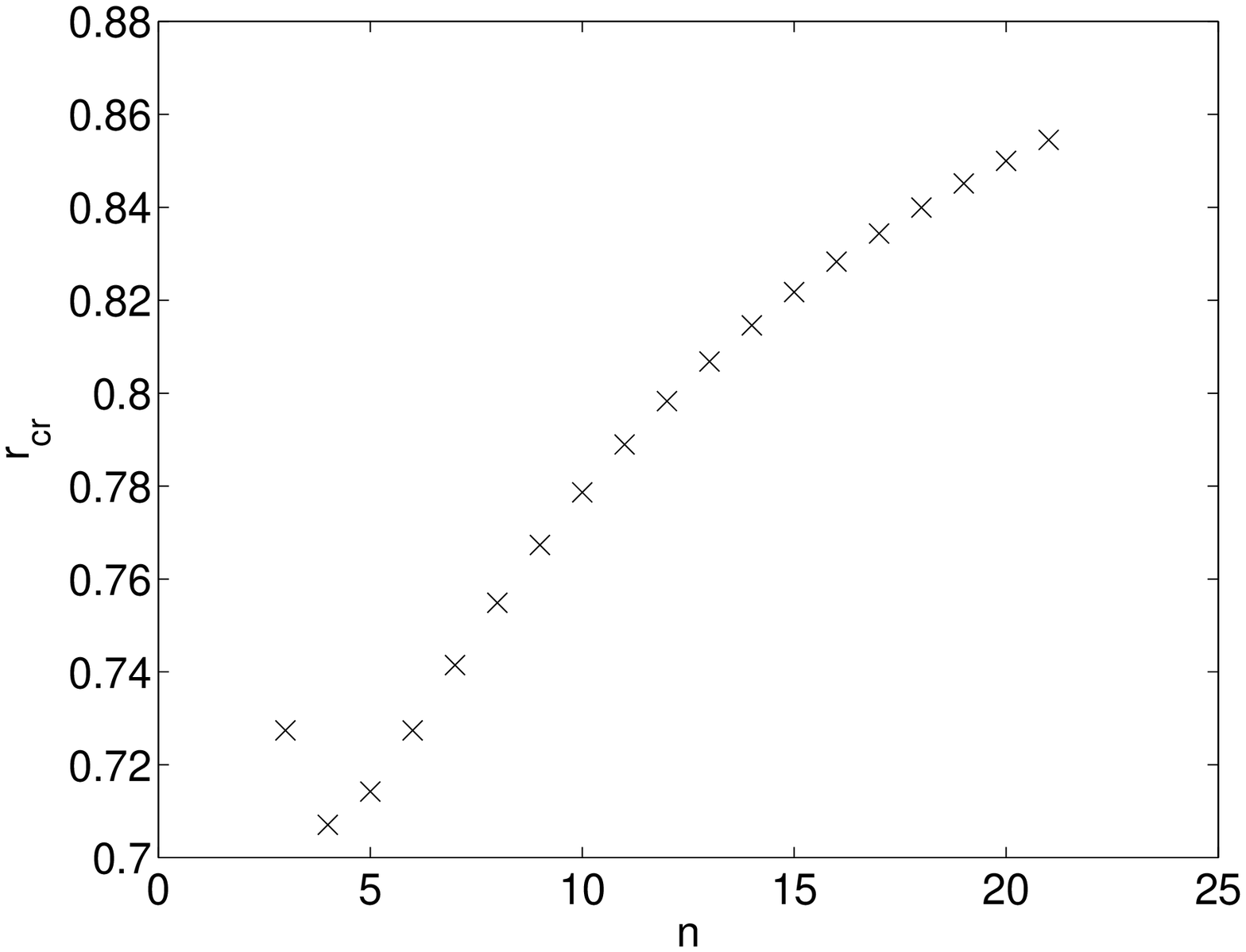}
    \includegraphics[width=.48\textwidth]{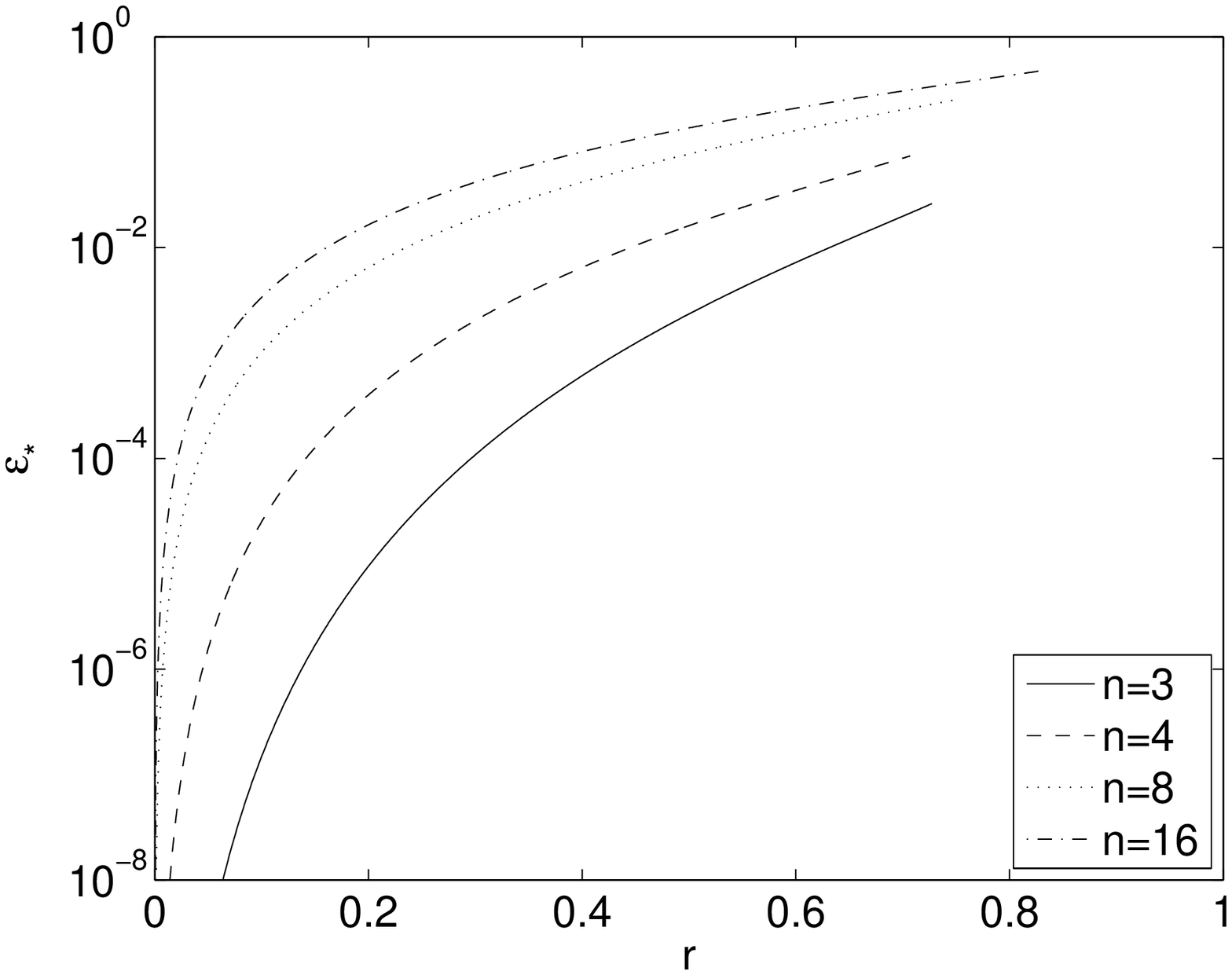}
\end{center}
    \caption{Left: The critical radius $r_{\crit}$ for $3\leq n\leq 21$.
    Right: $\eps_*$ from~\eqref{eqn:eps_star} for $n=3,4,8,16$ and $r\in (0,r_{\crit})$;
    $r_{\crit}(3) \approx 0.7274$,  $r_{\crit}(4) \approx 0.7071$, $r_{\crit}(8) \approx 0.7549$,
    $r_{\crit}(16) \approx 0.8283$. 
\label{fig:sharp_eps}}
\end{figure}

\begin{figure}[p]
    \includegraphics[width=.48\textwidth]{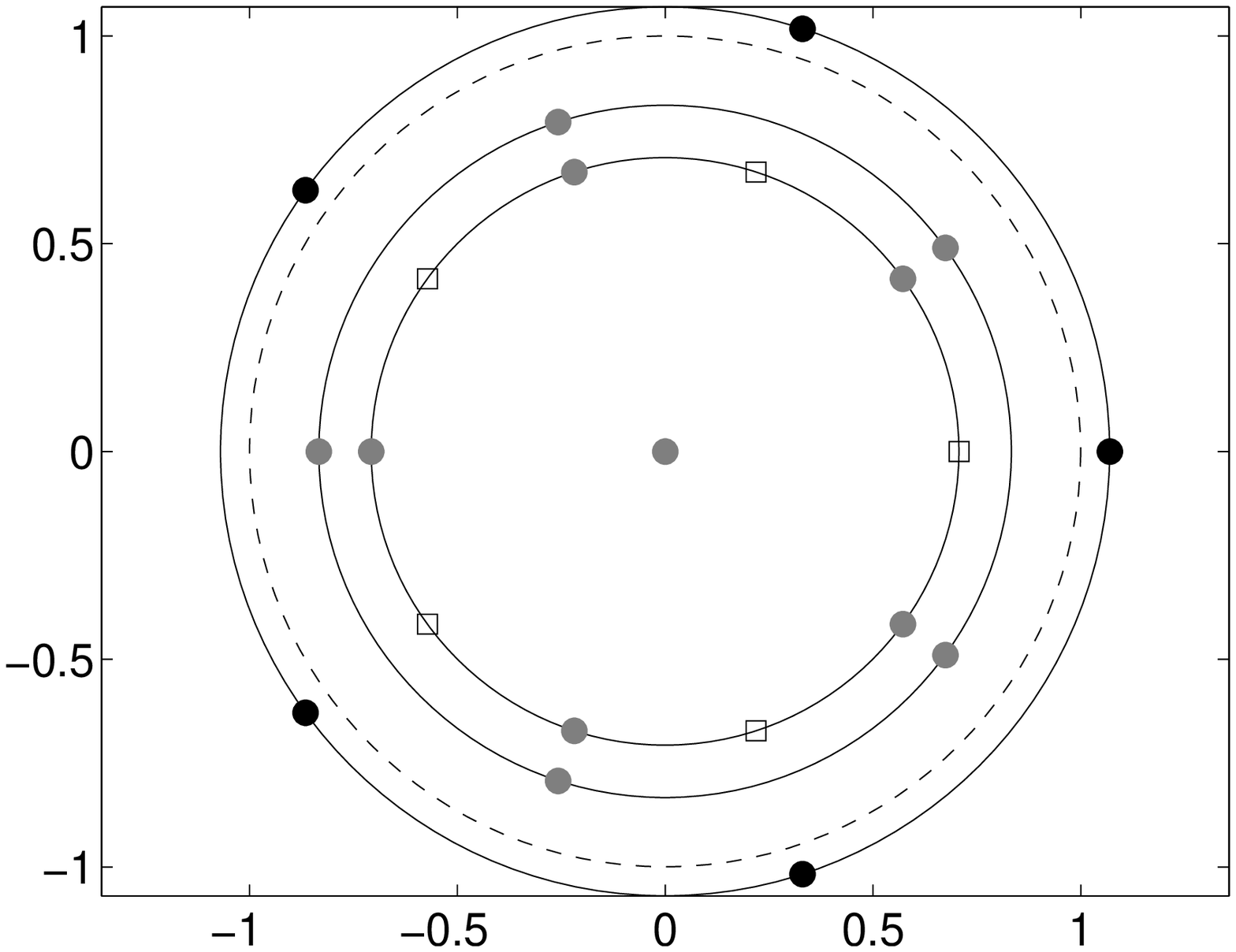}
    \hfill
    \includegraphics[width=.48\textwidth]{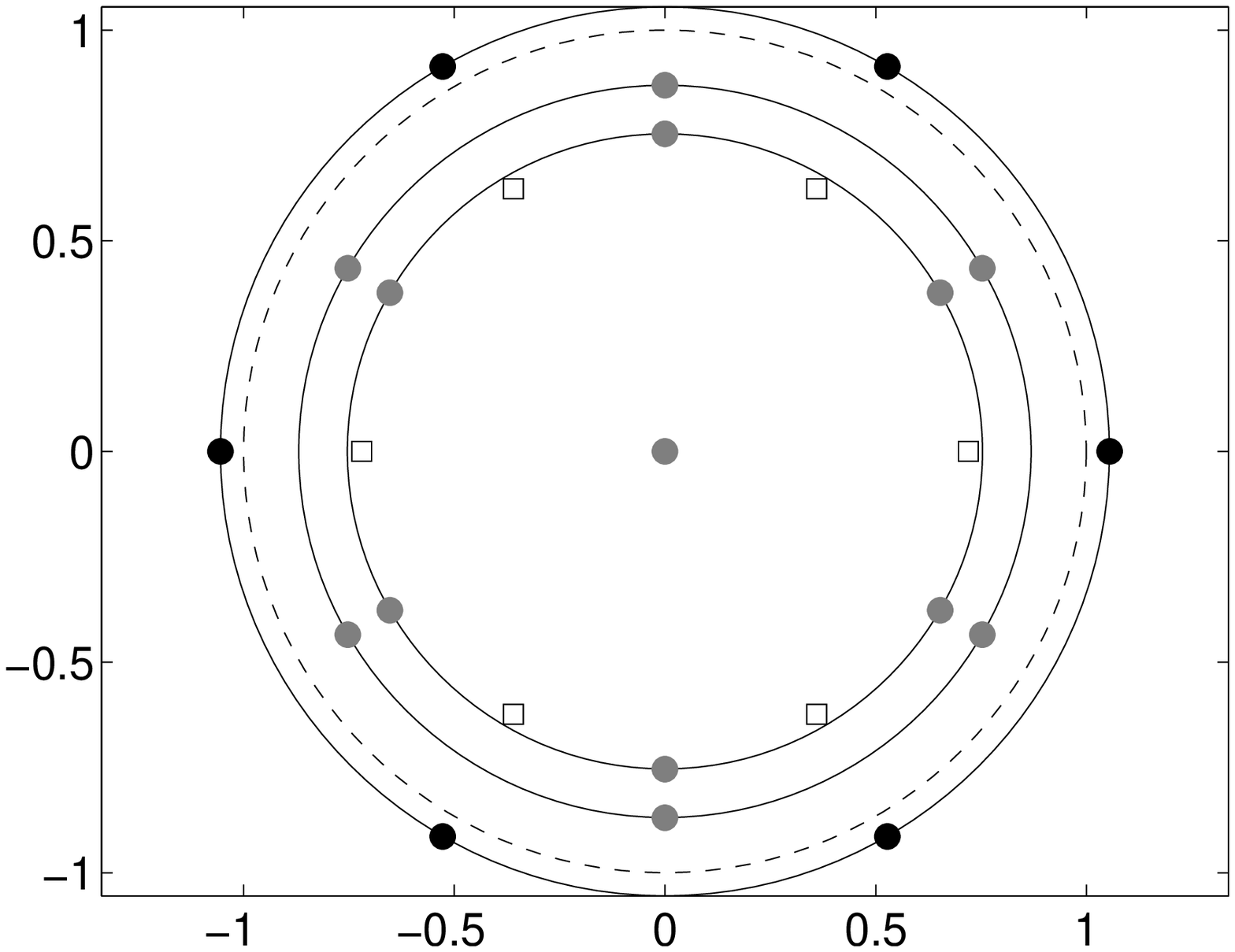}
\newline
    \includegraphics[width=.48\textwidth]{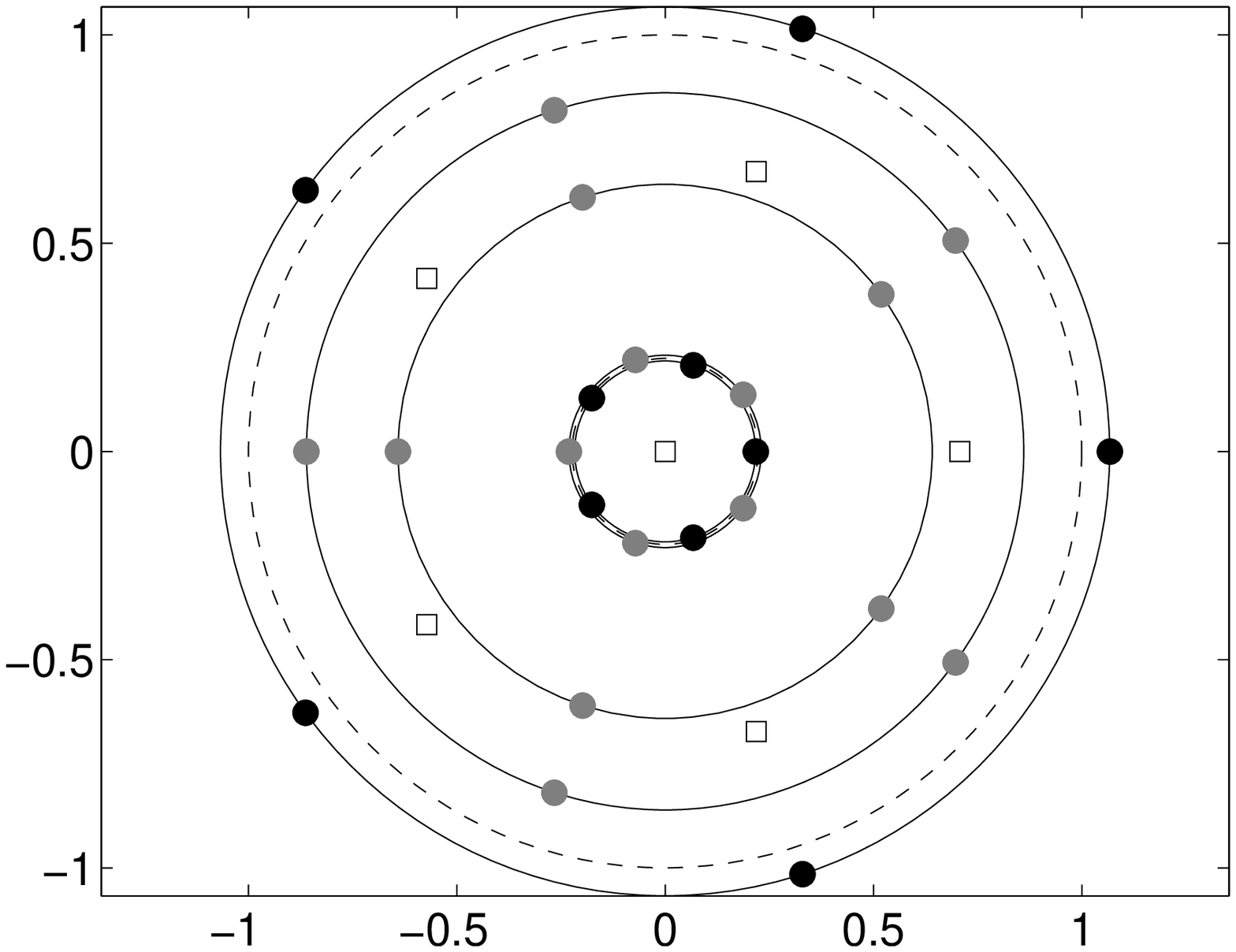}
    \hfill
    \includegraphics[width=.48\textwidth]{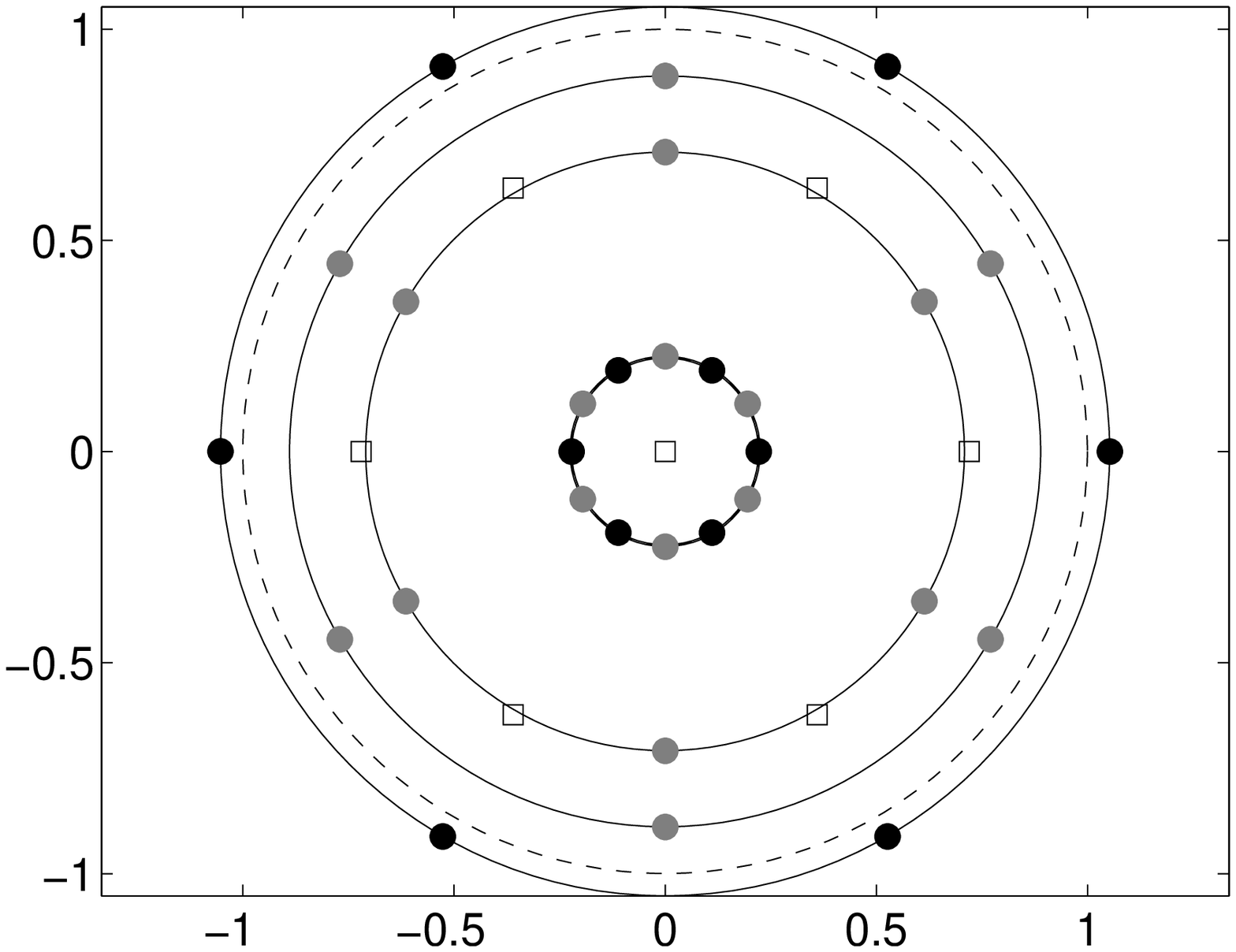}

\caption{Solutions of $R(z) = \conj{z}$
from~\eqref{eqn:unpert_rat_fun} (top) and $R_\eps(z) = \conj{z}$
from~\eqref{eqn:rhie_perturb} (bottom) for $n = 5$ (left) and $n = 6$
(right).  In both cases we used $\eps = 0.05$, and $r$ is chosen as
$0.99 \cdot r_{\crit}(n)$ (which is larger than the
radius~\eqref{eqn:rhie_val}).
The black points correspond to solutions with $e^{i n \varphi} = 1$
and the grey points are $z = 0$ (top) and the solutions with $e^{i n
\varphi} = -1$.  Squares indicate the mass points.  The dashed circles
are the unit circle and the circle of radius $\sqrt{\eps}$ (bottom
only).
\label{fig:Rhie_points}
}
\end{figure}

Note that $\eps_\ast$ in \eqref{eqn:eps_star} is a computable quantity
that depends on $n$ and $r$.
Fig.~\ref{fig:sharp_eps} displays $\eps_\ast$
for a few values of $n$ and $r\in (0,r_{\crit})$.
Proposition~\ref{prop:num_sol} and
Theorem~\ref{thm:1} are illustrated numerically in
Fig.~\ref{fig:Rhie_points}.
In particular,
the figure shows that the solutions of $R(z) = \conj{z}$ and
$\rpert(z) = \conj{z}$ are located on three and five circles around
the origin, respectively.

The bound on $\eps$ given in Theorem~\ref{thm:1} is sharp, but it is unclear
whether the value of $\eps_\ast$ can be parameterized explicitly
as a function of $n$ and $r$.  Next we
give an explicit parameterization of a slightly weaker bound on the
eligible mass $\eps$, depending only on $n$ and $r$, such that
$R_\eps(z) = \conj{z}$ admits the maximal number of solutions.

\begin{cor}
\label{cor:maxrhie_suff}
Let $n \ge 3$ and $r \in (0, r_{\crit})$ be given. The equation
$R_\eps(z) = \conj{z}$ has $5n$ solutions if
\begin{equation}
\label{eqn:bound_lls}
    0 < \eps < \tfrac{\zeta_0^{n+2} - \zeta_0^n + r^n \zeta_0^2}{r^n},
    \quad\mbox{where}\quad \zeta_0 \coloneq \tfrac{n+6}{n+8} r^{\frac{3n+1}{3n-6}}.
\end{equation}
\end{cor}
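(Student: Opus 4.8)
The plan is to reduce the corollary to Theorem~\ref{thm:1} and then show that the explicit quantity on the right-hand side of~\eqref{eqn:bound_lls} is a positive lower bound for $\eps_\ast$. Writing $p(\rho) = \rho^n - \rho^{n-2} + r^n$ and $q(\rho) \coloneq \rho^2 p(\rho) = \rho^{n+2} - \rho^n + r^n \rho^2$ as in the proof of Theorem~\ref{thm:1}, the right-hand side of~\eqref{eqn:bound_lls} equals exactly $q(\zeta_0)/r^n$, while $\eps_\ast = q(\xi_1)/r^n$. Since Theorem~\ref{thm:1} says that $\rpert(z) = \conj{z}$ has $5n$ solutions precisely when $0 < \eps < \eps_\ast$, it suffices to prove the chain
\begin{equation*}
0 < \tfrac{q(\zeta_0)}{r^n} \le \eps_\ast = \tfrac{q(\xi_1)}{r^n}.
\end{equation*}

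The key structural observation, already contained in the proof of Theorem~\ref{thm:1}, is that $\xi_1$ maximises $q$ on the interval $(0, z_1)$, where $z_1 < z_2$ denote the two positive roots of $p$. Indeed $q' = f_-' = \rho\,g$ with $g > 0$ on $(0,\xi_1)$ and $g < 0$ on $(\xi_1, z_1) \subset (\xi_1, \xi_2)$, so $q$ increases on $(0,\xi_1)$, decreases on $(\xi_1, z_1)$, and satisfies $q(z_1) = 0$. Hence $q(\zeta) \in (0, q(\xi_1)]$ for \emph{every} $\zeta \in (0, z_1)$, so the whole corollary follows once I show that $\zeta_0 \in (0, z_1)$. (Note that this argument does not require $\zeta_0 \le \xi_1$; only membership in $(0,z_1)$ is needed, which is what makes the weaker explicit bound robust.)

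To locate $\zeta_0$ I would verify two facts. First, $p(\zeta_0) > 0$: this is exactly where the exponent $\tfrac{3n+1}{3n-6}$ in the definition of $\zeta_0$ has been engineered. Since $(n-2)\cdot\tfrac{3n+1}{3n-6} = \tfrac{3n+1}{3}$, one gets $\zeta_0^{n-2} = \big(\tfrac{n+6}{n+8}\big)^{n-2} r^{(3n+1)/3}$, whence
\begin{equation*}
\frac{r^n}{\zeta_0^{n-2}} = \Big(\tfrac{n+8}{n+6}\Big)^{n-2} r^{\,n - (3n+1)/3} = \Big(\tfrac{n+8}{n+6}\Big)^{n-2} r^{-1/3} > 1,
\end{equation*}
because $r < 1$ and $\tfrac{n+8}{n+6} > 1$. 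Thus $r^n > \zeta_0^{n-2} > \zeta_0^{n-2}(1-\zeta_0^2)$, i.e.\ $p(\zeta_0) = \zeta_0^n - \zeta_0^{n-2} + r^n > 0$. Second, recalling from the proof of Proposition~\ref{prop:num_sol} that $p$ is decreasing on $(0, (\tfrac{n-2}{n})^{1/2})$ (its unique positive critical point) with $p(0) = r^n > 0$ and a sign change at $z_1$, the two facts $p(\zeta_0) > 0$ and $\zeta_0 < (\tfrac{n-2}{n})^{1/2}$ together force $\zeta_0 \in (0, z_1)$, completing the argument.

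The main obstacle is this second fact, $\zeta_0 < (\tfrac{n-2}{n})^{1/2}$, whose role is to exclude the spurious alternative $\zeta_0 > z_2$ left open by $p(\zeta_0) > 0$. For moderately large $n$ this is comfortable, since each factor of $\zeta_0 = \tfrac{n+6}{n+8} r^{(3n+1)/(3n-6)}$ stays below the relevant threshold: $r < 1$ and $\tfrac{n+6}{n+8} < (\tfrac{n-2}{n})^{1/2}$ hold once $n$ is large enough. The tight regime is small $n$ (notably $n=3$), where I expect to invoke the sharper bound $r < r_{\crit} = (\tfrac{n-2}{n})^{1/2}(\tfrac{2}{n-2})^{1/n}$ together with the fact that the exponent $\tfrac{3n+1}{3n-6} > 1$ makes $r^{(3n+1)/(3n-6)}$ genuinely small. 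This leaves only a finite family of elementary one-variable estimates in $r$, which I would dispatch directly.
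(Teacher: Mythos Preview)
Your approach is essentially the paper's: both reduce to showing $q(\zeta_0)\le q(\xi_1)$ by placing $\zeta_0$ in an interval on which $\xi_1$ maximises $q$, and both identify the technical estimate on $\zeta_0$ as the one nontrivial ingredient. The only difference is the interval chosen. The paper places $\zeta_0$ in $(0,\xi_2)$ by proving the single inequality $\zeta_0 < \big(\tfrac{n-2}{n+2}\big)^{1/2}$ (deferred to Lemma~\ref{lem:zeta_eta}), whereas you place $\zeta_0$ in the smaller interval $(0,z_1)$ via the pair $p(\zeta_0)>0$ and $\zeta_0 < \big(\tfrac{n-2}{n}\big)^{1/2}$. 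Your decomposition trades a slightly weaker upper bound on $\zeta_0$ for the extra (but very clean) verification $p(\zeta_0)>0$; the latter also immediately certifies that the right-hand side of~\eqref{eqn:bound_lls} is positive, which the paper leaves implicit. Your sketch of the remaining inequality $\zeta_0 < \big(\tfrac{n-2}{n}\big)^{1/2}$ is accurate in spirit---the paper handles its (stronger) version exactly as you anticipate, by monotonicity in $r$, a crude estimate valid for $n\ge 6$, and direct checks for $n=3,4,5$.
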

\begin{proof}
We use some notation from the proof of Theorem~\ref{thm:1}; we show
that $f_{-}(\zeta_0) > 0$ and that $r_2 < \zeta_0 < r_3$.  With some
algebraic manipulation one sees that the first assertion is equivalent
to~\eqref{eqn:bound_lls}.  Assuming that $f_{-}(\zeta_0) > 0$, the
second assertion is implied by $\zeta_0 <
(\frac{n-2}{n+2})^{\frac{1}{2}}$, which follows from a rather
technical calculation (Lemma~\ref{lem:zeta_eta}).
\eop
\end{proof}

\begin{figure}[t]
\begin{center}
    \includegraphics[width=.48\textwidth]{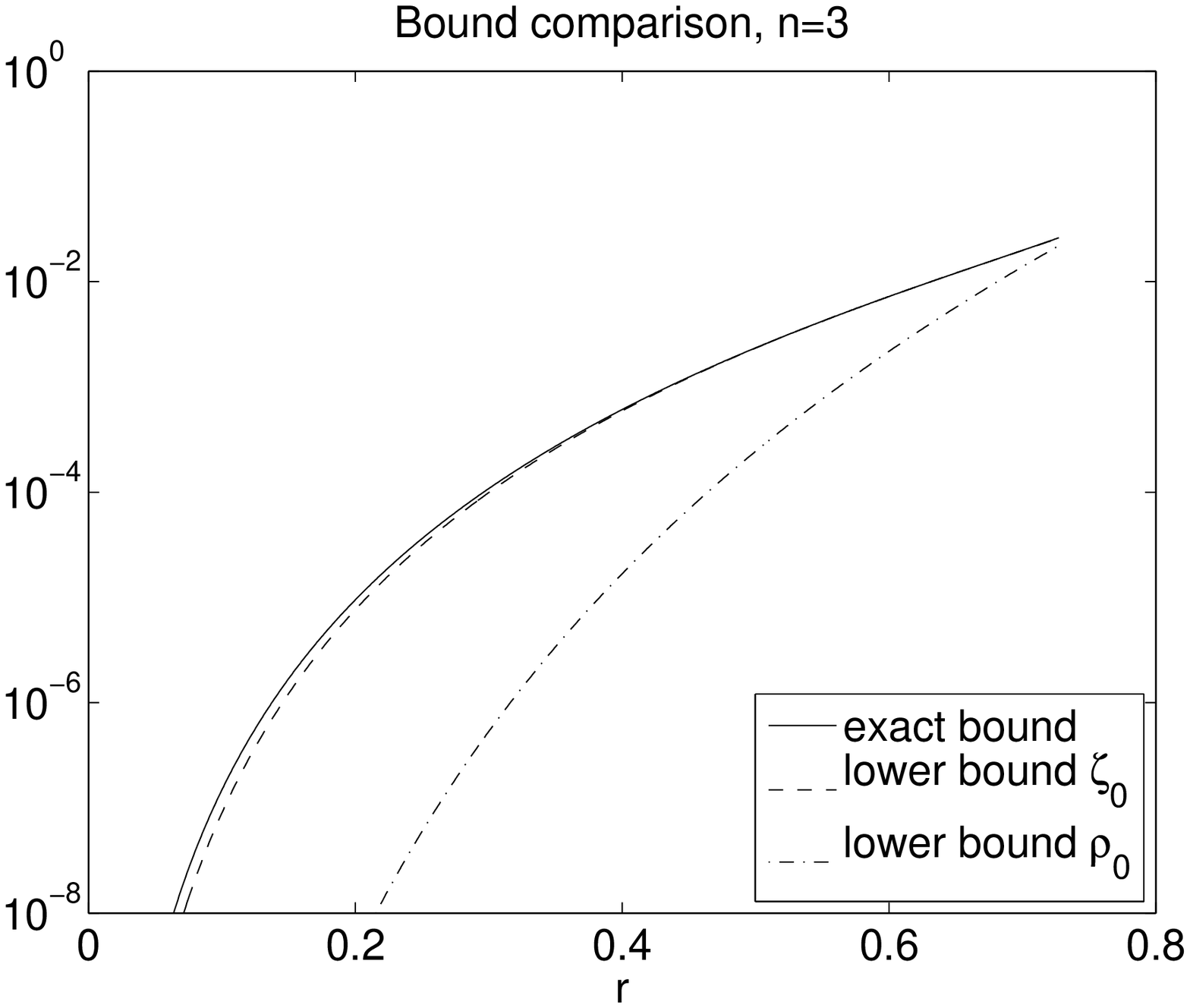}
    \hfill
    \includegraphics[width=.48\textwidth]{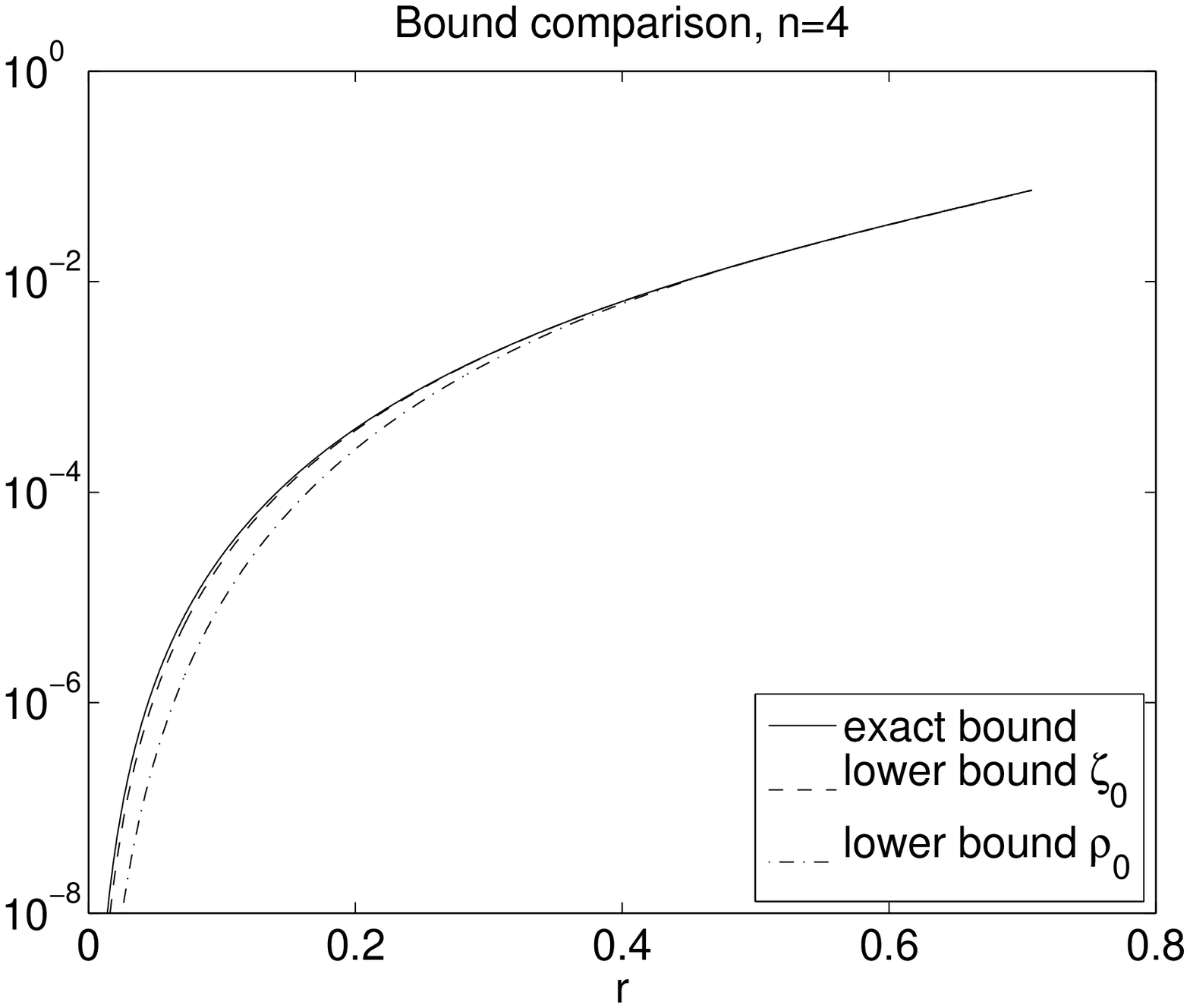}
    \newline
    \includegraphics[width=.48\textwidth]{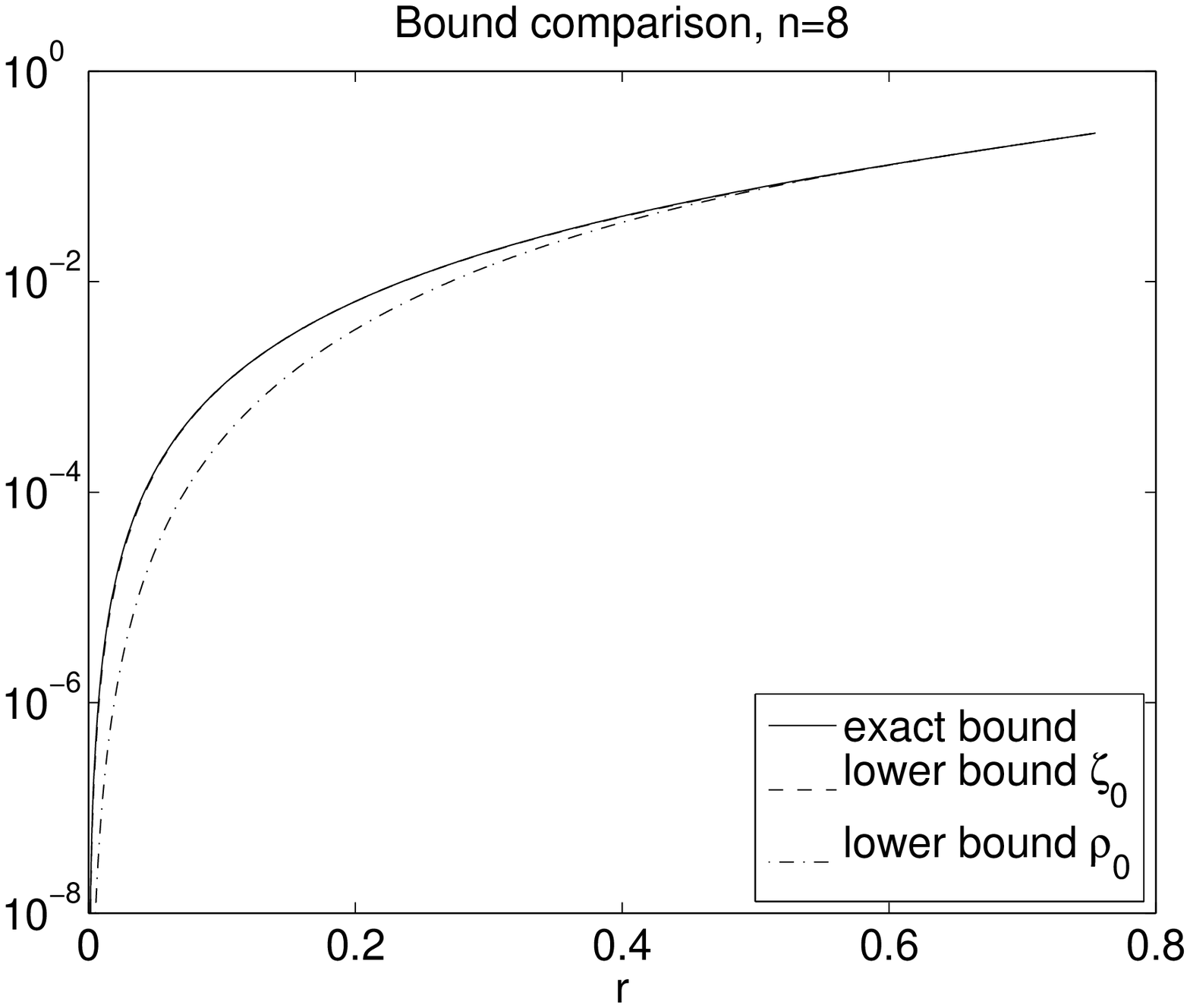}
    \hfill
    \includegraphics[width=.48\textwidth]{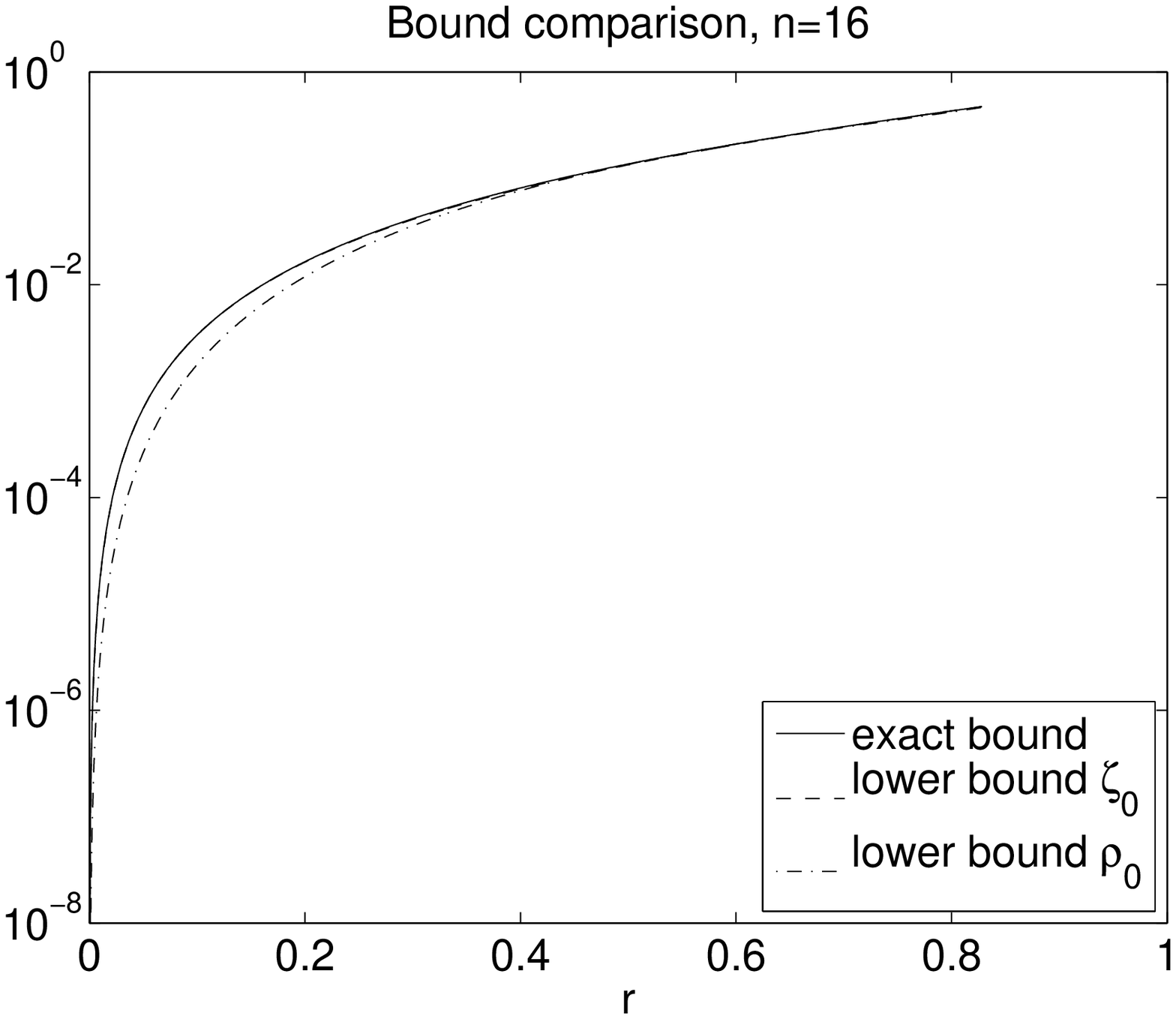}
\end{center}
    \caption{Comparison of mass bounds that imply maximal lensing.
For $n=8$ and $n=16$, the exact bound and the lower bound implied by
$\zeta_0$ are visually indistinguishable. Values below $10^{-8}$ have
been clipped.
\label{fig:rhie_bound_cmp}}
\end{figure}

\begin{rem}
\label{rem:rho0_rhie}
An alternative bound can be derived from the approach taken
in~\cite{BayerDyer2007}.  There, instead of using $\zeta_0$ from above
as a ``trial point'' at which $f_{-}(\rho)$ is positive, the authors
choose $\rho_0 = r^{1 + \frac{6}{n}}$.  Note that $\rho_0 <
(\tfrac{n-2}{n+2})^{\frac{1}{2}}$ holds for $n \geq 3$ and $r \in (0,
r_{\crit})$.  Applied to our context, the resulting mass bound becomes
\begin{equation}
    \label{eqn:bound_bd}
    0 < \eps < \tfrac{\rho_0^{n+2} - \rho_0^n + r^n \rho_0^2}{r^n} =
    r^{\frac{12}{n} + 8} - r^6 + r^{\frac{12}{n} + 2}.
\end{equation}
\end{rem}

The mass constraint~\eqref{eqn:bound_lls} is displayed in
Fig.~\ref{fig:rhie_bound_cmp} (``lower bound $\zeta_0$'') along with the
exact bound given in Theorem~\ref{thm:1} and the
bound from~\eqref{eqn:bound_bd} (``lower bound $\rho_0$'').

In Theorem~\ref{thm:1} the bound $r_{\crit}$ on $r$ is suggested by
Proposition~\ref{prop:num_sol}.  Numerical experiments show that the equation
$R_\eps(z) = \conj{z}$ can still have $5n$ solutions for $r$ slightly
larger than $r_{\crit}$.  We did not try to quantify the exact upper
bound on eligible $r$ such that $R_\eps(z) = \conj{z}$ has $5n$ solutions.
However, we can show the following.

\begin{cor}
    \label{cor:r_ub_rhie}
For any $\eps \in (0,1)$ and $r \geq
(\tfrac{n-2}{n+2})^{\frac{n-2}{2n}}$, the equation $R_\eps(z) =
\conj{z}$ has exactly $3n$ solutions.
\end{cor}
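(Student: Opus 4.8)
The plan is to reuse the reduction from the proof of Theorem~\ref{thm:1} almost verbatim and to observe that the hypothesis on $r$ is exactly the threshold at which $f_-$ loses its two extra roots. Writing $z=\rho e^{i\varphi}$ with $\rho>0$ turns $\rpert(z)=\conj{z}$ into~\eqref{eqn:comp_rho_phi_pert}, forcing $e^{in\varphi}=\pm 1$. The case $e^{in\varphi}=1$ is insensitive to the size of $r$: as established in the proof of Theorem~\ref{thm:1}, for every $r>0$ and every $\eps\in(0,1)$ the polynomial $f_+(\rho)=\rho^{n+2}-\rho^n-r^n\rho^2+\eps r^n$ has exactly one root in $(0,\sqrt{\eps})$ and one in $(1,\infty)$, contributing $2n$ solutions. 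Thus the whole claim reduces to showing that, under the stated lower bound on $r$, the polynomial $f_-(\rho)=\rho^{n+2}-\rho^n+r^n\rho^2-\eps r^n$ has \emph{exactly one} root in $(\sqrt{\eps},1)$, so that the case $e^{in\varphi}=-1$ yields only $n$ solutions and the grand total is $3n$.

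The crux is a single sign computation at the minimizer of $f_-$. Recall that $f_-'(\rho)=\rho\,g(\rho)$ with $g(\rho)=(n+2)\rho^n-n\rho^{n-2}+2r^n$, and that $g'(\rho)=n\rho^{n-3}((n+2)\rho^2-(n-2))$ changes sign exactly once on $(0,\infty)$, so $g$ attains its \emph{global} minimum there at $\rho_\ast=(\tfrac{n-2}{n+2})^{1/2}$. I would evaluate $g(\rho_\ast)$: pulling out $\rho_\ast^{\,n-2}$ and using $(n+2)\rho_\ast^2-n=-2$ gives the clean identity
\begin{equation*}
g(\rho_\ast)=2\big(r^n-\big(\tfrac{n-2}{n+2}\big)^{\frac{n-2}{2}}\big).
\end{equation*}
Hence the hypothesis $r\geq(\tfrac{n-2}{n+2})^{\frac{n-2}{2n}}$ is precisely the condition $g(\rho_\ast)\geq 0$. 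Since $\rho_\ast$ is the global minimum of $g$ on $(0,\infty)$, this forces $g\geq 0$ throughout, so $f_-'(\rho)=\rho\,g(\rho)\geq 0$ and $f_-$ is nondecreasing on $(0,\infty)$ (strictly increasing, as $f_-'$ vanishes only at the isolated points $\rho=0$ and possibly $\rho=\rho_\ast$).

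To finish I would combine this monotonicity with the immediate boundary evaluations $f_-(\sqrt{\eps})=\eps^{n/2}(\eps-1)<0$ and $f_-(1)=(1-\eps)r^n>0$, both valid for $\eps\in(0,1)$. A strictly increasing function with these boundary signs has exactly one root $r_4\in(\sqrt{\eps},1)$, producing the $n$ solutions $r_4 e^{i(2k+1)\pi/n}$ and none others in this case; together with the $2n$ solutions from $f_+$ this gives exactly $3n$. I expect no genuine obstacle: the only steps needing care are the factorization yielding $g(\rho_\ast)$ and the confirmation that $\rho_\ast$ is a global rather than merely local minimizer, the latter being immediate from the single sign change of $g'$. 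The bound on $r$ is manifestly sharp, as it is exactly the locus $g(\rho_\ast)=0$, below which the three-root regime of Theorem~\ref{thm:1} resumes.
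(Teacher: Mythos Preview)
Your argument is correct and essentially identical to the paper's: evaluate $g$ at its unique positive critical point $\rho_\ast=(\tfrac{n-2}{n+2})^{1/2}$, observe that the hypothesis on $r$ is precisely $g(\rho_\ast)\ge 0$, and conclude that $f_-$ is strictly increasing with a single positive root; your explicit boundary checks $f_-(\sqrt{\eps})<0<f_-(1)$ are a nice addition confirming that this root lies in the required interval $(\sqrt{\eps},1)$.

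One caveat on your closing remark: the claim that the bound is ``manifestly sharp'' is not justified by this argument and is not asserted in the paper. The condition $g(\rho_\ast)<0$ only ensures that $f_-$ has two interior critical points; it does \emph{not} by itself guarantee that $f_-$ acquires three roots for some $\eps\in(0,1)$, since one still needs $f_-(\xi_1)>0$, i.e.\ $\xi_1^2 p(\xi_1)>0$ with $p(\rho)=\rho^n-\rho^{n-2}+r^n$, and for $r$ in the gap between $r_{\crit}$ and $(\tfrac{n-2}{n+2})^{\frac{n-2}{2n}}$ this requires separate verification (the paper explicitly leaves the exact threshold on $r$ open).
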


\begin{proof}
In the notation of the proof of Theorem~\ref{thm:1}, we have
\begin{equation*}
g( (\tfrac{n-2}{n+2})^{\frac{1}{2}} )
= (\tfrac{n-2}{n+2})^{\frac{n-2}{2}} \big( (n+2) \tfrac{n-2}{n+2} - n \big) + 2
r^n
= -2 (\tfrac{n-2}{n+2})^{\frac{n-2}{2}} + 2 r^n.
\end{equation*}
Thus, $r \geq (\tfrac{n-2}{n+2})^{\frac{n-2}{2n}}$ is equivalent to $g((\tfrac{n-2}{n+2})^{\frac{1}{2}})
\geq 0$.  Since $g(\rho)$ has its minimum at this point, we have $g(\rho) \geq
0$ for $\rho > 0$ with equality at most for $\rho =
(\tfrac{n-2}{n+2})^{\frac{1}{2}}$. Now, since $f_-'(\rho) = \rho g(\rho)$,
the function $f_-(\rho)$ is strictly monotonically increasing and has only one
positive zero.  This implies that $R_\eps(z) = \conj{z}$ has only $3n$ solutions.
\eop
\end{proof}

\section{Extension to additive perturbations}
\label{sec:bd}

In Rhie's lens model~\eqref{eqn:rhie_perturb} the perturbation does not
alter the total mass of the lens. In
fact, since~\eqref{eqn:unpert_rat_fun} is normalized, the total mass
remains normalized to unity after the perturbation.

Another way to model a small mass perturbation
of~\eqref{eqn:unpert_rat_fun} is to \emph{increase} the total lens
mass by placing an \emph{additional} mass in the origin.  This kind of
perturbation was considered by Bayer and Dyer~\cite{BayerDyer2007}.
Mathematically, the resulting image configuration of the perturbed
lens corresponds to the solutions to the equation $S_{r,\eps}(w) =
\conj{w}$, where
\begin{equation}
    \label{eqn:bd_perturb}
    S_{r,\eps} (w) = R_r(w) + \tfrac{\eps}{w} =
    \tfrac{w^{n-1}}{w^n - r^n} + \tfrac{\eps}{w} =
    \tfrac{(1 + \eps) w^n - \eps r^n}{w (w^n - r^n)}.
\end{equation}
Here and in what follows, we write $R_r(z)$ and $R_{r, \eps}(z)$ for
the functions in~\eqref{eqn:unpert_rat_fun}
and~\eqref{eqn:rhie_perturb} in order to make the dependence on $r$ and $\eps$
explicit.

For notational clarity, we will always use $w$ as the variable in the
function~\eqref{eqn:bd_perturb}, as opposed to $z$ which we will
reserve for~\eqref{eqn:rhie_perturb}.  In particular, we will show how
to transform solutions from one equation to the other, which
necessitates the use of different variable names.

\begin{lem}
\label{lem:scaling}
Let $T(z) = \sum_{j=1}^{n} \frac{\sigma_j}{z - z_j}$ be a rational
function and $c > 0$.  Then
\begin{equation*}
    c T(z) = \conj{z} \;\;\Leftrightarrow\;\;
    \widetilde{T}(w) = \conj{w},
\end{equation*}
where $\widetilde{T}(w) = \sum_{j=1}^{n} \frac{\sigma_j}{w - w_j}$, $w_j =
\frac{z_j}{\sqrt{c}}$ and $w = \frac{z}{\sqrt{c}}$.
\end{lem}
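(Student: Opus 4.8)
The plan is to prove the equivalence by the direct substitution $z = \sqrt{c}\, w$, exploiting that $c$ is a \emph{positive real} number so that complex conjugation commutes with the scaling. First I would substitute $z = \sqrt{c}\, w$ into the left-hand equation $c\,T(z) = \conj{z}$. Since $\sqrt{c}\in\R_{>0}$, we have $\conj{z} = \conj{\sqrt{c}\,w} = \sqrt{c}\,\conj{w}$, so the right-hand side transforms transparently and the equation becomes $c\,T(\sqrt{c}\,w) = \sqrt{c}\,\conj{w}$.

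The key algebraic step is to track the factor $\sqrt{c}$ through the denominators of $T$. For each $j$ I would write $\sqrt{c}\,w - z_j = \sqrt{c}\,(w - z_j/\sqrt{c}) = \sqrt{c}\,(w - w_j)$, using the definition $w_j = z_j/\sqrt{c}$. Summing over $j$ gives $T(\sqrt{c}\,w) = \sum_{j=1}^{n} \frac{\sigma_j}{\sqrt{c}\,(w-w_j)} = \tfrac{1}{\sqrt{c}}\,\widetilde{T}(w)$. Plugging this into $c\,T(\sqrt{c}\,w) = \sqrt{c}\,\conj{w}$ yields $\sqrt{c}\,\widetilde{T}(w) = \sqrt{c}\,\conj{w}$, and dividing by $\sqrt{c} > 0$ produces exactly $\widetilde{T}(w) = \conj{w}$.

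Finally, since every manipulation amounts to multiplication or division by the nonzero real scalar $\sqrt{c}$, and the substitution $z = \sqrt{c}\,w$ is a bijection of $\C$, the argument is reversible, which establishes the two-sided implication $\Leftrightarrow$. There is really no substantive obstacle here; the single point requiring (mild) care is that $c > 0$ is real, which is precisely what guarantees $\conj{\sqrt{c}\,w} = \sqrt{c}\,\conj{w}$ and lets the $\sqrt{c}$ factors cancel cleanly. For complex $c$ this conjugation would not pass through the scaling, so the restriction $c>0$ in the hypothesis is exactly the feature that makes the computation go through.
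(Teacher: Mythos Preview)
Your proof is correct and follows essentially the same approach as the paper: factor $\sqrt{c}$ out of each denominator to rewrite $cT(z)$ as $\sqrt{c}\,\widetilde{T}(w)$, then cancel against $\conj{z}=\sqrt{c}\,\conj{w}$. The paper's argument is just a more compressed version of your computation.
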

\begin{proof}
From
\begin{equation*}
c T(z) = \sqrt{c} \sum_{j=1}^n \frac{\sigma_j}{\frac{z}{\sqrt{c}} -
\frac{z_j}{\sqrt{c}}}
\end{equation*}
we easily see that $c T(z) = \conj{z}$ if and only if $\sum_{j=1}^n \frac{\sigma_j}{w - w_j}  = \conj{w}$.
\eop
\end{proof}

The lemma can be used to map solutions of $R_{r, \eps}(z) =
\conj{z}$ and $S_{s, \delta} (w) = \conj{w}$ back and forth, as we
will show next.
\begin{prop}
\label{prop:rhie_bd_trafo}
Let $n \ge 3$, $r \in (0, r_{\crit})$ and $\eps\in(0,1)$ be given.
Then
\begin{equation*}
    R_{r, \eps}(z) = \conj{z} \;\;\Leftrightarrow\;\; S_{s, \delta}(w) = \conj{w},
\end{equation*}
where $s = \frac{r}{\sqrt{1 - \eps}}$, $\delta = \frac{\eps}{1 - \eps}$
and $w = \frac{z}{\sqrt{1 - \eps}}$.
\end{prop}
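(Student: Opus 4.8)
The plan is to reduce the claim to a single application of Lemma~\ref{lem:scaling}. The key observation is that both~\eqref{eqn:rhie_perturb} and~\eqref{eqn:bd_perturb} are sums of simple partial fractions, and the scaling lemma relates exactly such sums. First I would record the partial fraction decomposition of the unperturbed function $R_r$. Since $z^n - r^n = \prod_{k=0}^{n-1}(z - r e^{2\pi i k/n})$ and the residue of $R_r$ at each of these roots equals $\tfrac{1}{n}$, we have
\[
R_r(z) = \sum_{k=0}^{n-1} \frac{1/n}{z - r e^{2\pi i k/n}}.
\]
Consequently the Rhie function can be written as
\[
R_{r,\eps}(z) = (1-\eps)\sum_{k=0}^{n-1} \frac{1/n}{z - r e^{2\pi i k/n}} + \frac{\eps}{z},
\]
a sum of $n+1$ simple poles with real weights, one of them sitting at the origin.

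Next I would bring the equation $R_{r,\eps}(z) = \conj{z}$ into the form $c\,T(z) = \conj{z}$ required by Lemma~\ref{lem:scaling}, with a common factor $c > 0$. Factoring out $c \defby 1-\eps > 0$ gives $R_{r,\eps}(z) = (1-\eps)\,T(z)$, where
\[
T(z) = \sum_{k=0}^{n-1} \frac{1/n}{z - r e^{2\pi i k/n}} + \frac{\delta}{z}, \qquad \delta \defby \frac{\eps}{1-\eps}.
\]
Thus $R_{r,\eps}(z) = \conj{z}$ is literally $c\,T(z) = \conj{z}$, with weights $\sigma_k = \tfrac{1}{n}$ attached to the poles $r e^{2\pi i k/n}$ and weight $\delta$ attached to the pole at the origin. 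The only point requiring care here is tracking the factor $1-\eps$ so that the origin weight becomes $\delta = \eps/(1-\eps)$ rather than $\eps$.

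Applying Lemma~\ref{lem:scaling} with this $c$ then yields the equivalent equation $\widetilde{T}(w) = \conj{w}$ in the variable $w = z/\sqrt{1-\eps}$, in which every pole is scaled by $1/\sqrt{1-\eps}$ while the weights are unchanged: the origin stays fixed, and $r e^{2\pi i k/n}$ becomes $s\,e^{2\pi i k/n}$ with $s \defby r/\sqrt{1-\eps}$. The final step is to recognise $\widetilde{T}$ as a Bayer--Dyer function: reading the partial fraction identity above backwards, now with radius $s$, gives $\sum_{k=0}^{n-1}\tfrac{1/n}{w - s e^{2\pi i k/n}} = R_s(w)$, so that $\widetilde{T}(w) = R_s(w) + \tfrac{\delta}{w} = S_{s,\delta}(w)$ by~\eqref{eqn:bd_perturb}. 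This is precisely the asserted equivalence with the stated $s$, $\delta$ and $w$, and in passing one checks that $s>0$ and $\delta>0$ are well defined since $\eps\in(0,1)$. I do not anticipate a genuine obstacle: once the scaling lemma is in hand the argument is essentially bookkeeping, and the mild subtlety is purely the correct placement of the $(1-\eps)$ factor noted above.
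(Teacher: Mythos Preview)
Your proposal is correct and follows essentially the same route as the paper: factor $R_{r,\eps}(z) = (1-\eps)\bigl(R_r(z) + \tfrac{\delta}{z}\bigr)$, apply Lemma~\ref{lem:scaling} with $c = 1-\eps$, and identify the rescaled function as $S_{s,\delta}$. The only difference is that you spell out the partial fraction decomposition of $R_r$ explicitly, whereas the paper applies the scaling lemma directly to $R_r(z) + \tfrac{\delta}{z}$; this is a cosmetic distinction rather than a different argument.
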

\begin{proof}
Since
\begin{equation*}
R_{r, \eps}(z) = (1 - \eps) R_r(z) + \tfrac{\eps}{z}
= (1 - \eps) (R_r(z) + \tfrac{\delta}{z}),
\end{equation*}
Lemma~\ref{lem:scaling} shows that $R_{r, \eps}(z) = \conj{z}$ if and only if
$S_{s, \delta}(w) = R_s(w) + \tfrac{\delta}{w} = \conj{w}$.
\eop
\end{proof}

The correspondence of the parameters $r, s$ and $\eps, \delta$ can be
written conveniently as the transformation
\begin{equation}
\label{eqn:phi_trafo}
(s, \delta) = \Phi(r,\eps) = ( \tfrac{r}{\sqrt{1 - \eps}},
\tfrac{\eps}{1 - \eps}).
\end{equation}
Note that $\Phi$ is bijective with inverse
$(r, \eps) = \Phi^{-1} ( s, \delta )  = ( \tfrac{s}{\sqrt{1 + \delta}},
\tfrac{\delta}{1 + \delta})$.
We will use the transformation $\Phi$ in order to express an analogue
to Theorem~\ref{thm:1} for~\eqref{eqn:bd_perturb}.

\begin{thm}
\label{thm:maxlens_bd}
Let $n \ge 3$ and $r \in (0, r_{\crit})$ be given.  The equation
$S_{s, \delta}(w) = \conj{w}$ has $5n$ solutions if and only if $\eps \in
(0,\eps_*)$, with $\eps_*$ from \eqref{eqn:eps_star}, and
where the parameters $r, \eps$ and $s, \delta$ are
connected via~\eqref{eqn:phi_trafo}.
These solutions are located on five circles with radii $0 < s_1 <
\sqrt{\delta} < s_2 < s_3 < s_4 < \sqrt{1 + \delta} < s_5$.
\end{thm}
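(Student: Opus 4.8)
The plan is to transfer the entire solution count from Theorem~\ref{thm:1} through the bijective correspondence established in Proposition~\ref{prop:rhie_bd_trafo}, and then to track how the five radii transform under the associated scaling of the plane.

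First I would fix $n$ and $r$, let $\eps$ range over $(0,1)$, and set $(s,\delta) = \Phi(r,\eps)$ as in~\eqref{eqn:phi_trafo}. Proposition~\ref{prop:rhie_bd_trafo} then furnishes the explicit correspondence $z \mapsto w = z/\sqrt{1-\eps}$ between solutions of $R_{r,\eps}(z) = \conj{z}$ and solutions of $S_{s,\delta}(w) = \conj{w}$. Since this map is a nonzero scaling, it is a bijection of the plane, and in particular the two equations have exactly the same number of solutions. By Theorem~\ref{thm:1} the equation $R_{r,\eps}(z) = \conj{z}$ has $5n$ solutions when $\eps \in (0,\eps_\ast)$, exactly $4n$ when $\eps = \eps_\ast$, and exactly $3n$ when $\eps > \eps_\ast$; hence it has $5n$ solutions if and only if $\eps \in (0,\eps_\ast)$. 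Transferring through the bijection yields the desired \emph{if and only if} characterization for $S_{s,\delta}(w) = \conj{w}$, with $\eps_\ast$ from~\eqref{eqn:eps_star}.

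It remains to locate the five circles. By Theorem~\ref{thm:1} the solutions of $R_{r,\eps}(z) = \conj{z}$ lie on circles of radii $0 < r_1 < \sqrt{\eps} < r_2 < r_3 < r_4 < 1 < r_5$. The scaling $w = z/\sqrt{1-\eps}$ multiplies every radius by $1/\sqrt{1-\eps}$, so the solutions of $S_{s,\delta}(w) = \conj{w}$ lie on circles of radii $s_i = r_i/\sqrt{1-\eps}$. A short computation using $\delta = \eps/(1-\eps)$, and hence $1+\delta = 1/(1-\eps)$, gives the two identities $\sqrt{\eps}/\sqrt{1-\eps} = \sqrt{\delta}$ and $1/\sqrt{1-\eps} = \sqrt{1+\delta}$. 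Dividing the whole radius chain by $\sqrt{1-\eps}$ therefore converts the thresholds $\sqrt{\eps}$ and $1$ into $\sqrt{\delta}$ and $\sqrt{1+\delta}$, producing exactly $0 < s_1 < \sqrt{\delta} < s_2 < s_3 < s_4 < \sqrt{1+\delta} < s_5$, as claimed.

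Since each step is an immediate consequence of the preceding results, I do not expect a genuine obstacle here; this is essentially a clean corollary of Theorem~\ref{thm:1} and Proposition~\ref{prop:rhie_bd_trafo}. The only points requiring care are verifying that the scaling is a true bijection of the two solution sets, so that cardinalities and the strict radial ordering are preserved, and checking the two radius identities that map $\sqrt{\eps}$ and $1$ to $\sqrt{\delta}$ and $\sqrt{1+\delta}$.
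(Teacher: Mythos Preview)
Your proposal is correct and follows essentially the same approach as the paper: invoke Proposition~\ref{prop:rhie_bd_trafo} to transport solutions bijectively via $w = z/\sqrt{1-\eps} = z\sqrt{1+\delta}$, then read off everything from Theorem~\ref{thm:1}. You spell out more than the paper does---in particular the identities $\sqrt{\eps}/\sqrt{1-\eps} = \sqrt{\delta}$ and $1/\sqrt{1-\eps} = \sqrt{1+\delta}$ that convert the radius thresholds---but the underlying argument is identical.
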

\begin{proof}
From Proposition~\ref{prop:rhie_bd_trafo} we know that solutions to $S_{s,
\delta}(w) = \conj{w}$ can be mapped to solutions to $R_{r,
\eps}(z) = \conj{z}$ via
$w = \frac{z}{\sqrt{1 - \eps}} = z \sqrt{1 + \delta}$ (and vice
versa).  Using this relation, the assertions follow directly from
Theorem~\ref{thm:1}.
\eop
\end{proof}

\begin{figure}[t]
\begin{center}
    \includegraphics[width=.48\textwidth]{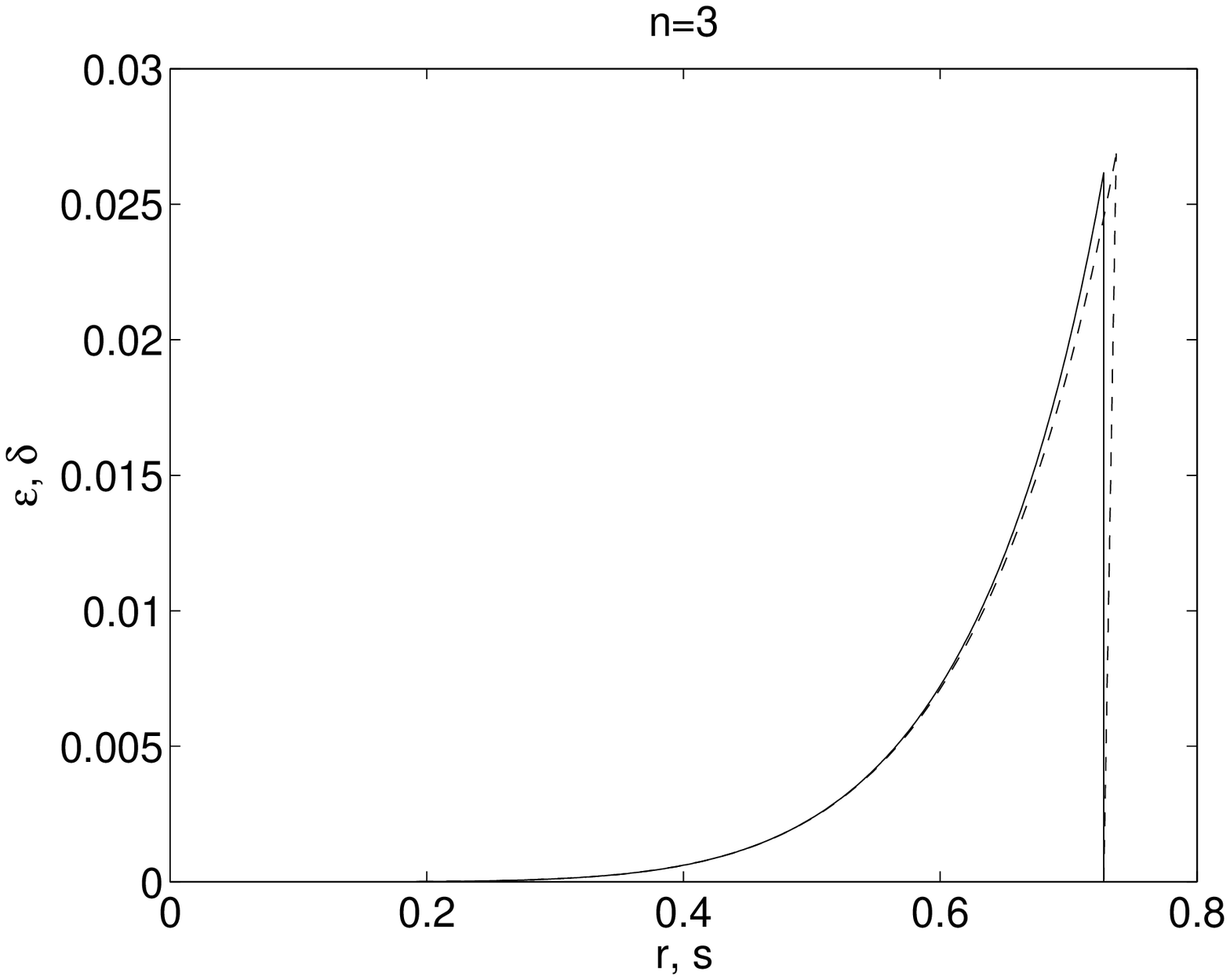}
    \hfill
    \includegraphics[width=.48\textwidth]{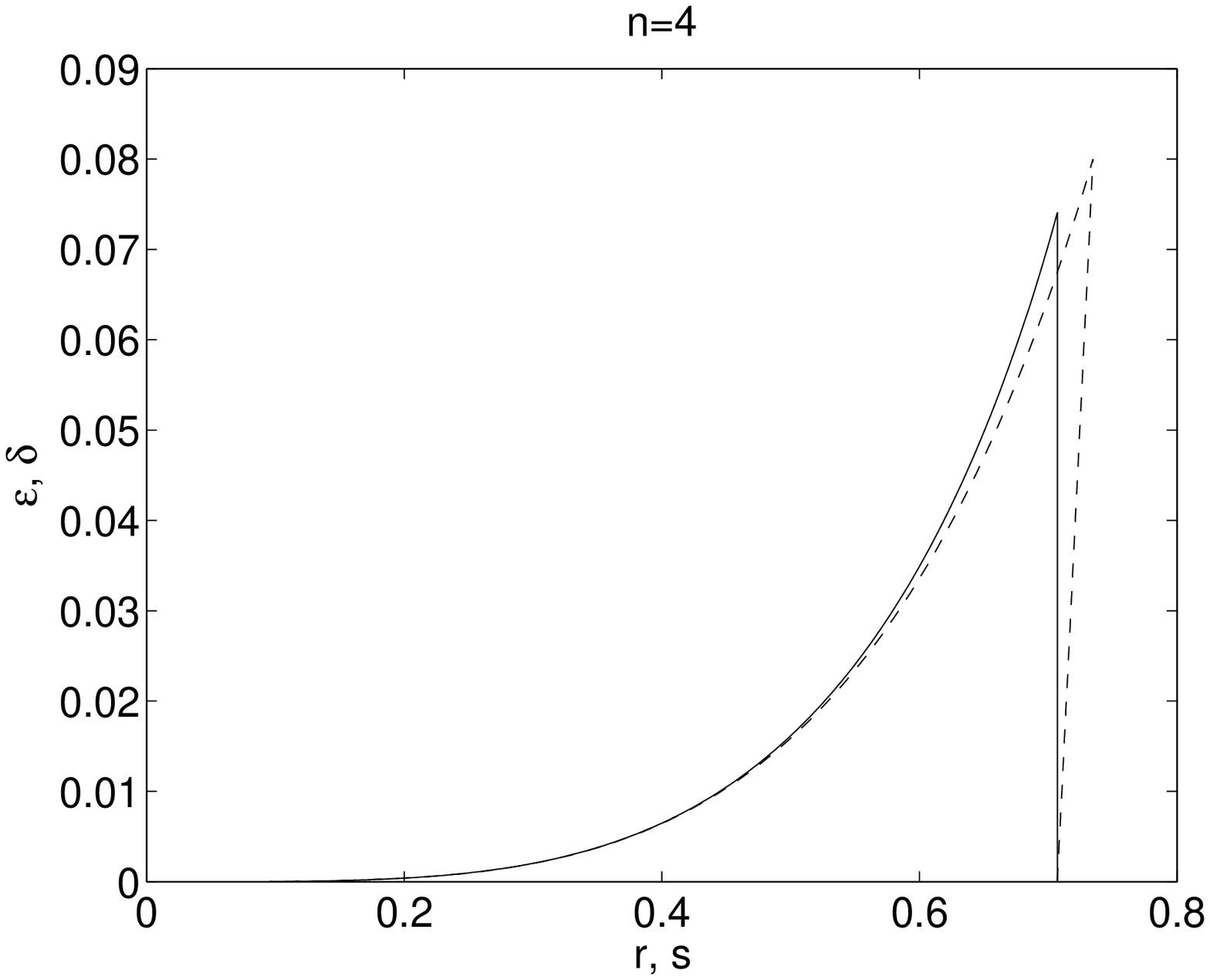}
    \newline
    \includegraphics[width=.48\textwidth]{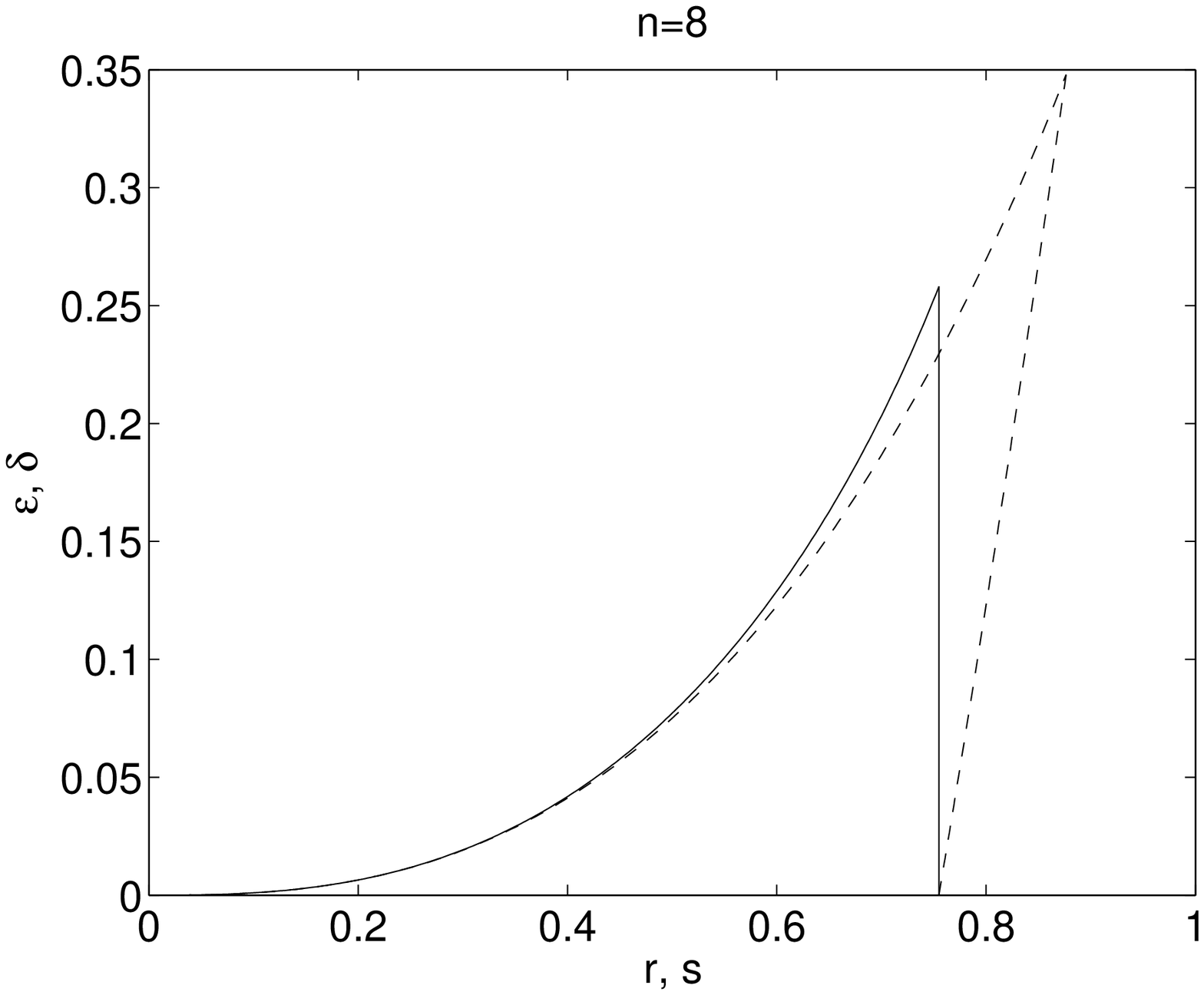}
    \hfill
    \includegraphics[width=.48\textwidth]{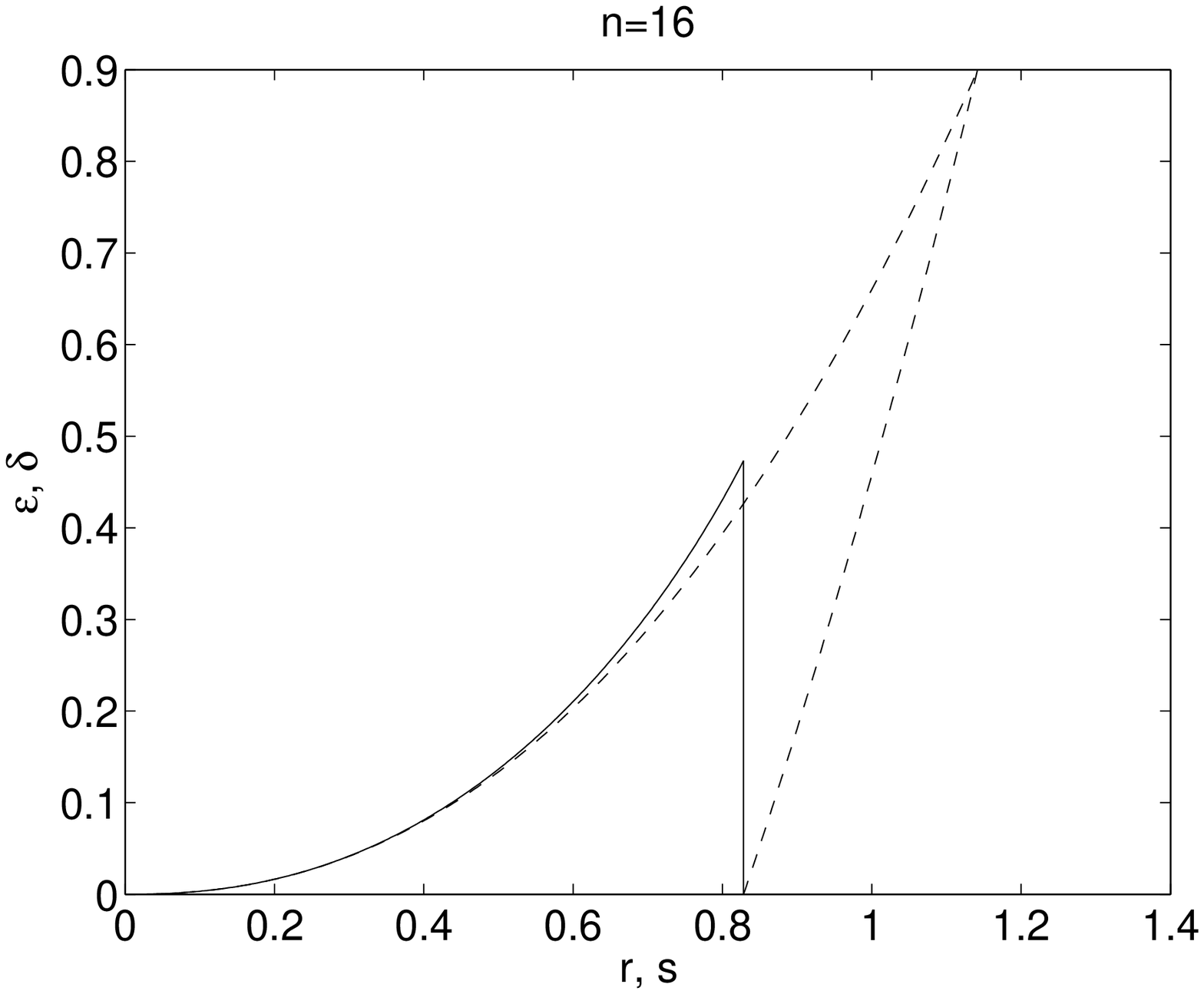}
\end{center}
    \caption{Comparison of parameters $(r, \eps)$ that induce maximal
lensing in $R_{r,\eps}(z) = \conj{z}$ with corresponding parameters
$(s, \delta)$ for $S_{s, \delta}(w) = \conj{w}$.  The region enclosed
by the horizontal axis and the solid lines contain all such admissible $(r,
\eps)$ pairs, while the transformed $(s, \delta)$ pairs are enclosed by
the horizontal axis and the dashed lines.
\label{fig:eps_delta}}
\end{figure}

Theorem~\ref{thm:maxlens_bd} describes parameter pairs $(s, \delta)$
such that maximal lensing occurs.  Note that $s$ is \emph{not} limited
by $r_{\crit}$, and that $s$ and $\delta$ are functionally dependent
via the transformation~\eqref{eqn:phi_trafo}.  In
Fig.~\ref{fig:eps_delta} we compare the range of admissible $(r,
\eps)$ pairs with $r < r_{\crit}$ with the transformed pairs $(s,
\delta)$.  Note also that the upper bound on $\delta$ shown in this
figure is \emph{sharp} for all values of $s$, i.e., maximal lensing
cannot happen for values of $\delta$ larger than the ones described in
Theorem~\ref{thm:maxlens_bd}.

In light of Corollary~\ref{cor:r_ub_rhie}, the plots suggest
that there is a \emph{lower} bound on the mass $\delta$, such that
maximal lensing happens for values $s > r_{\crit}$.  An example for
this phenomenon is given in Fig.~\ref{fig:bd_good_bad}.

\begin{figure}[t]
\begin{center}
    \includegraphics[width=.48\textwidth]{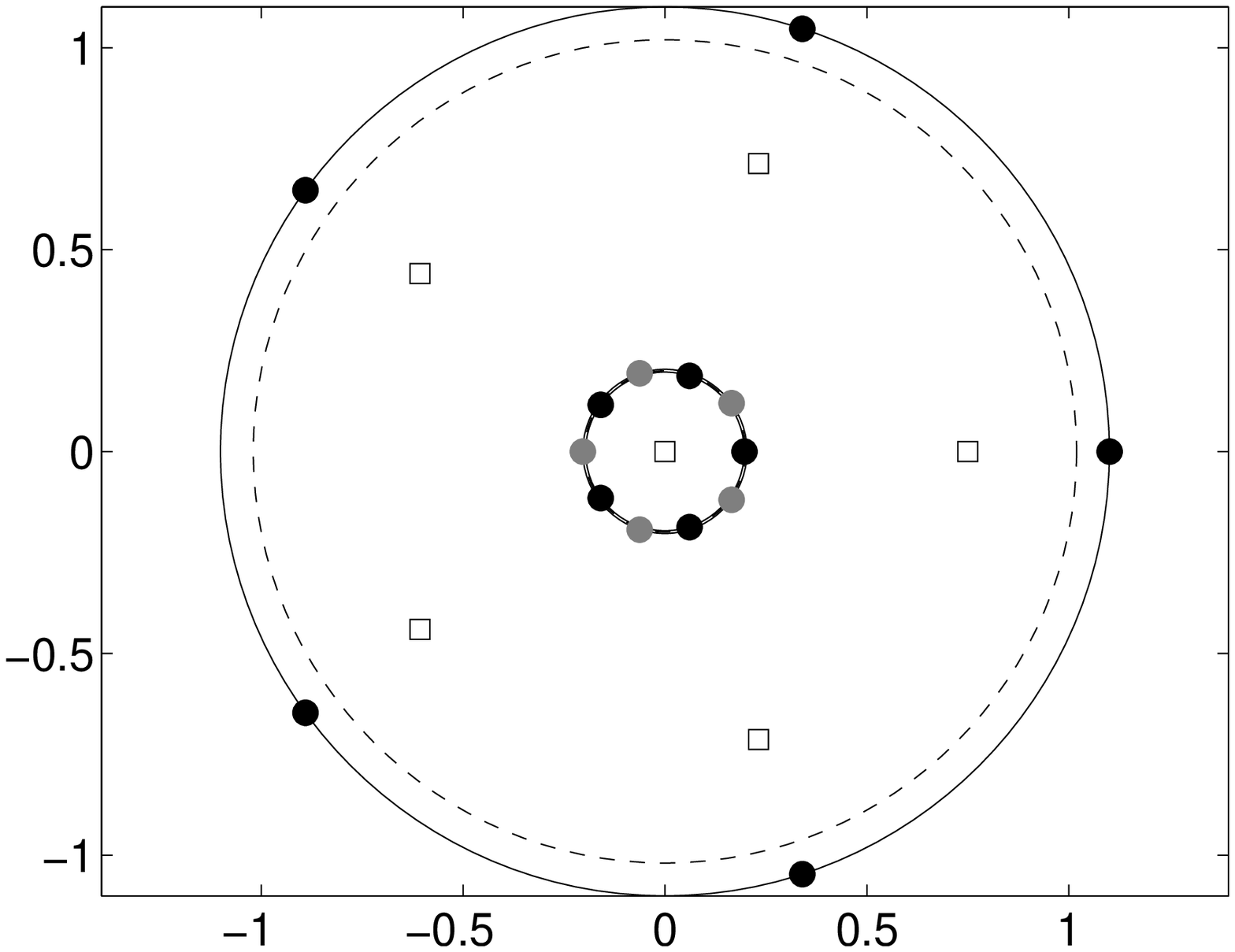}
    \hfill
    \includegraphics[width=.48\textwidth]{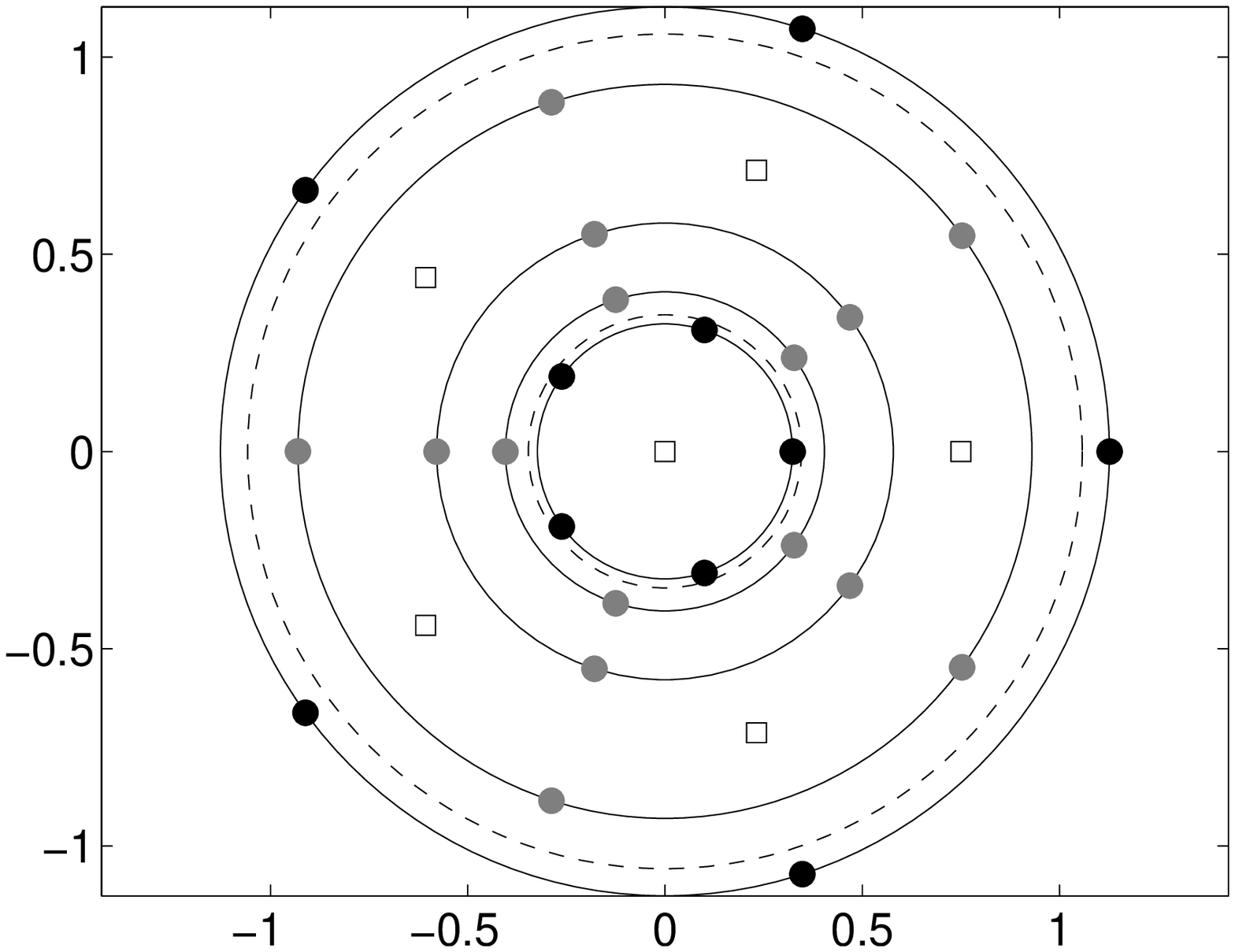}
\end{center}
    \caption{
Computed solutions to the equation $S_{s, \delta}(w) = \conj{w}$ with
$n=5$, $s=0.75 > r_{\crit}$ for masses $\delta_1 = 0.04$ (left) and
$\delta_2 = 0.12$ (right). The symbols are as in
Fig.~\ref{fig:Rhie_points}.  The mass $\delta_1$ is not sufficiently
large for this value of $s$ to induce maximal lensing.
\label{fig:bd_good_bad}}
\end{figure}

\begin{rem}
By repeating the proof of Theorem~\ref{thm:1} with respect
to~\eqref{eqn:bd_perturb},  another representation of the sharp upper
bound on $\delta$ can be obtained as
\begin{equation*}
    \delta <
     \tfrac{\xi_1^{n+2} - \xi_1^n + s^n \xi_1^2}{s^n + \xi_1^n},
\end{equation*}
where $\xi_1$ is the smallest positive root of
$(n+2) \xi^n - (1+\delta) n \xi^{n-2} + 2 s^n$.  Note that in contrast
to the situation in Theorem~\ref{thm:1}, $\xi_1$ now depends on
$\delta$, which makes this representation rather difficult to interpret.
For convenience, we repeat the proof for this case in
Theorem~\ref{thm:2}.
\end{rem}

We next present an explicit bound for the mass $\delta$, depending
only on $s$ and $n$, which implies maximal lensing.  The proof follows
from Theorem~\ref{thm:2} and by repeating the argument used in
Corollary~\ref{cor:maxrhie_suff}.
\begin{thm}
Let $n \ge 3$ and $s \in (0, r_{\crit})$ be given. The equation
$S_{s, \delta} (w) = \conj{w}$ has $5n$ solutions if
\begin{equation}
\label{eqn:bdmax_zeta}
    \delta < \tfrac{\zeta_0^{n+2} - \zeta_0^n + s^n \zeta_0^2}{s^n + \zeta_0^n},
    \quad\mbox{where}\quad \zeta_0
\defby \tfrac{n+6}{n+8} s^{\frac{3n+1}{3n-6}}.
\end{equation}
\end{thm}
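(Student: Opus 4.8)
The plan is to mirror the proof of Theorem~\ref{thm:2} together with the trial-point argument of Corollary~\ref{cor:maxrhie_suff}, now for the function~\eqref{eqn:bd_perturb}. First I would pass to polar coordinates $w = \rho e^{i\varphi}$ with $\rho > 0$ and $\varphi \in \R$. Exactly as in Theorem~\ref{thm:2}, the equation $S_{s,\delta}(w) = \conj{w}$ then forces $e^{in\varphi} \in \{-1,1\}$. The case $e^{in\varphi} = 1$ contributes $2n$ solutions for every admissible $(s,\delta)$, lying on circles of radii $s_1 \in (0,\sqrt{\delta})$ and $s_5 \in (\sqrt{1+\delta},\infty)$, so it suffices to produce three solution radii from the case $e^{in\varphi} = -1$. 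The latter case reduces to the polynomial
\[
f_-(\rho) = \rho^{n+2} - (1+\delta)\rho^n + s^n \rho^2 - \delta s^n,
\]
whose roots relevant to solutions lie in $(\sqrt{\delta},\sqrt{1+\delta})$. By Theorem~\ref{thm:2}, $f_-$ has a local maximum at $\xi_1$ and a local minimum at $\xi_2$, where $\xi_1 < \xi_2$ are the positive roots of $g(\rho) = (n+2)\rho^n - (1+\delta)n\rho^{n-2} + 2s^n$, and for $s \in (0,r_{\crit})$ the local minimum is negative; hence $S_{s,\delta}(w)=\conj{w}$ attains $5n$ solutions precisely when $f_-(\xi_1) > 0$.

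Following Corollary~\ref{cor:maxrhie_suff}, I would then take the trial point $\zeta_0 = \tfrac{n+6}{n+8} s^{\frac{3n+1}{3n-6}}$. A direct algebraic manipulation shows that $f_-(\zeta_0) > 0$ is equivalent to the bound~\eqref{eqn:bdmax_zeta}: writing out $f_-(\zeta_0) > 0$ and collecting the terms carrying the factor $\delta$ gives $\delta(\zeta_0^n + s^n) < \zeta_0^{n+2} - \zeta_0^n + s^n\zeta_0^2$. Since $f_-$ increases on $(0,\xi_1)$ and decreases on $(\xi_1,\xi_2)$, the point $\xi_1$ maximizes $f_-$ on $(0,\xi_2)$, so $f_-(\zeta_0) > 0$ together with $\zeta_0 < \xi_2$ yields $f_-(\xi_1) \ge f_-(\zeta_0) > 0$, which is exactly what is needed.

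It therefore remains to verify $\zeta_0 < \xi_2$. The derivative $g'$ vanishes at the minimizer $\rho_\ast = \big(\tfrac{(1+\delta)(n-2)}{n+2}\big)^{\frac{1}{2}}$ of $g$, so $\xi_1 < \rho_\ast < \xi_2$. Because $\delta > 0$ we have $\rho_\ast = (1+\delta)^{\frac{1}{2}}\big(\tfrac{n-2}{n+2}\big)^{\frac{1}{2}} \ge \big(\tfrac{n-2}{n+2}\big)^{\frac{1}{2}}$, whence it is enough to show $\zeta_0 < \big(\tfrac{n-2}{n+2}\big)^{\frac{1}{2}}$. As $\zeta_0$ depends only on the radius $s \in (0,r_{\crit})$ through the same expression as in the Rhie case, this is precisely Lemma~\ref{lem:zeta_eta}, and the chain $\zeta_0 < \big(\tfrac{n-2}{n+2}\big)^{\frac{1}{2}} \le \rho_\ast < \xi_2$ closes the argument.

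The step I expect to need the most care, and the only genuinely new feature compared with Corollary~\ref{cor:maxrhie_suff}, is that here the critical points $\xi_1,\xi_2$ of $f_-$ depend on $\delta$ (through the factor $1+\delta$ in $g$), whereas in the Rhie setting they were independent of the perturbation mass. The key observation that keeps the argument clean is that this dependence only \emph{raises} the threshold $\rho_\ast$ above its $\delta$-free value $\big(\tfrac{n-2}{n+2}\big)^{\frac{1}{2}}$, so the estimate of Lemma~\ref{lem:zeta_eta} — which is exactly the technical inequality already used in Corollary~\ref{cor:maxrhie_suff} — remains applicable verbatim with $s$ in place of $r$.
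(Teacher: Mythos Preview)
Your proposal is correct and follows essentially the same approach as the paper, which states that the proof ``follows from Theorem~\ref{thm:2} and by repeating the argument used in Corollary~\ref{cor:maxrhie_suff}.'' You have simply spelled out the details of that repetition, including the one new wrinkle---that the critical points $\xi_1,\xi_2$ now depend on $\delta$---and correctly observed that this only shifts the minimizer $\rho_\ast$ of $g$ upward, so Lemma~\ref{lem:zeta_eta} still yields $\zeta_0 < (\tfrac{n-2}{n+2})^{1/2} \le \rho_\ast < \xi_2$ and hence $f_-(\xi_1)\ge f_-(\zeta_0)>0$.
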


\begin{figure}[t]
\begin{center}
    \includegraphics[width=.48\textwidth]{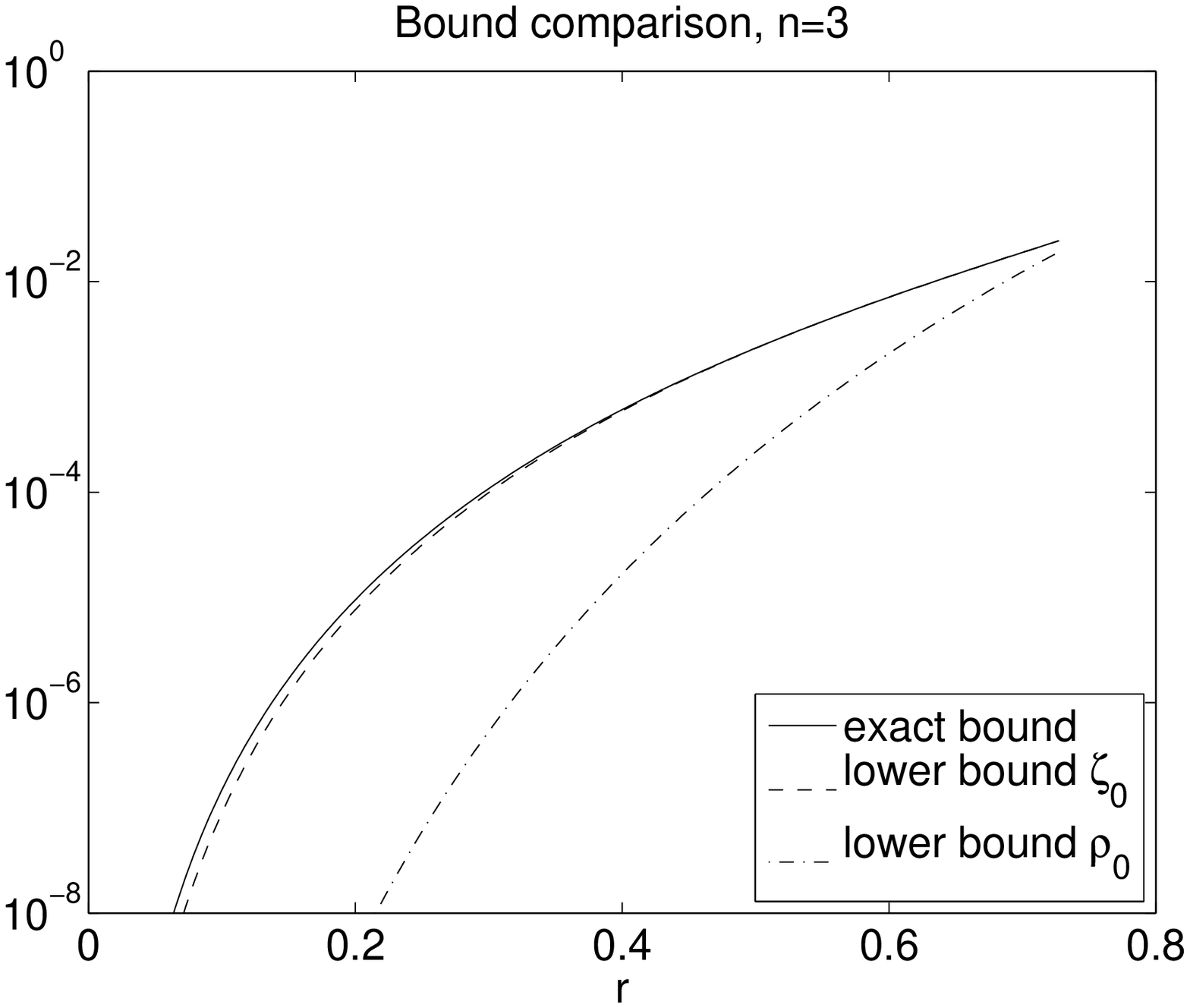}
    \hfill
    \includegraphics[width=.48\textwidth]{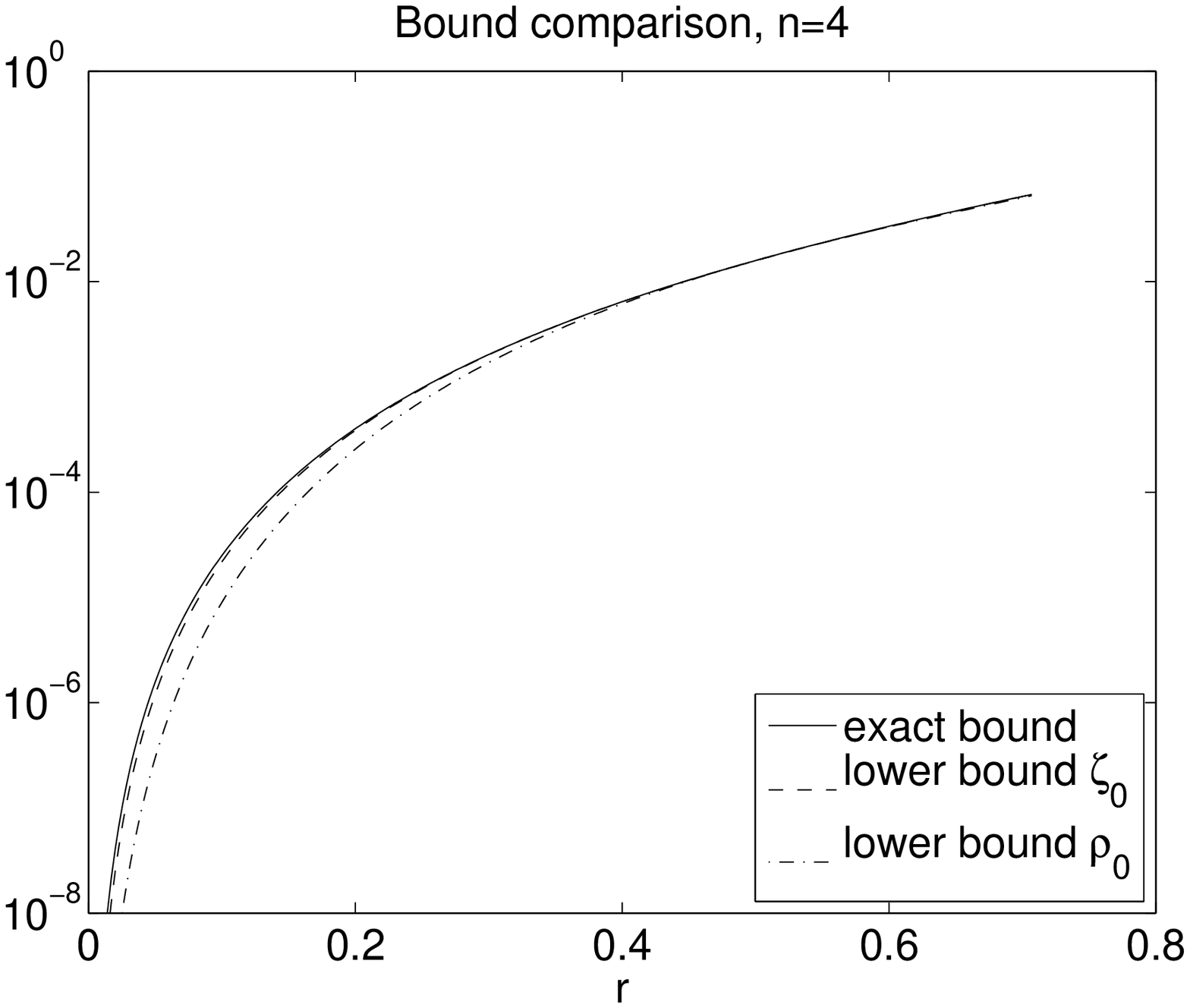}
    \newline
    \includegraphics[width=.48\textwidth]{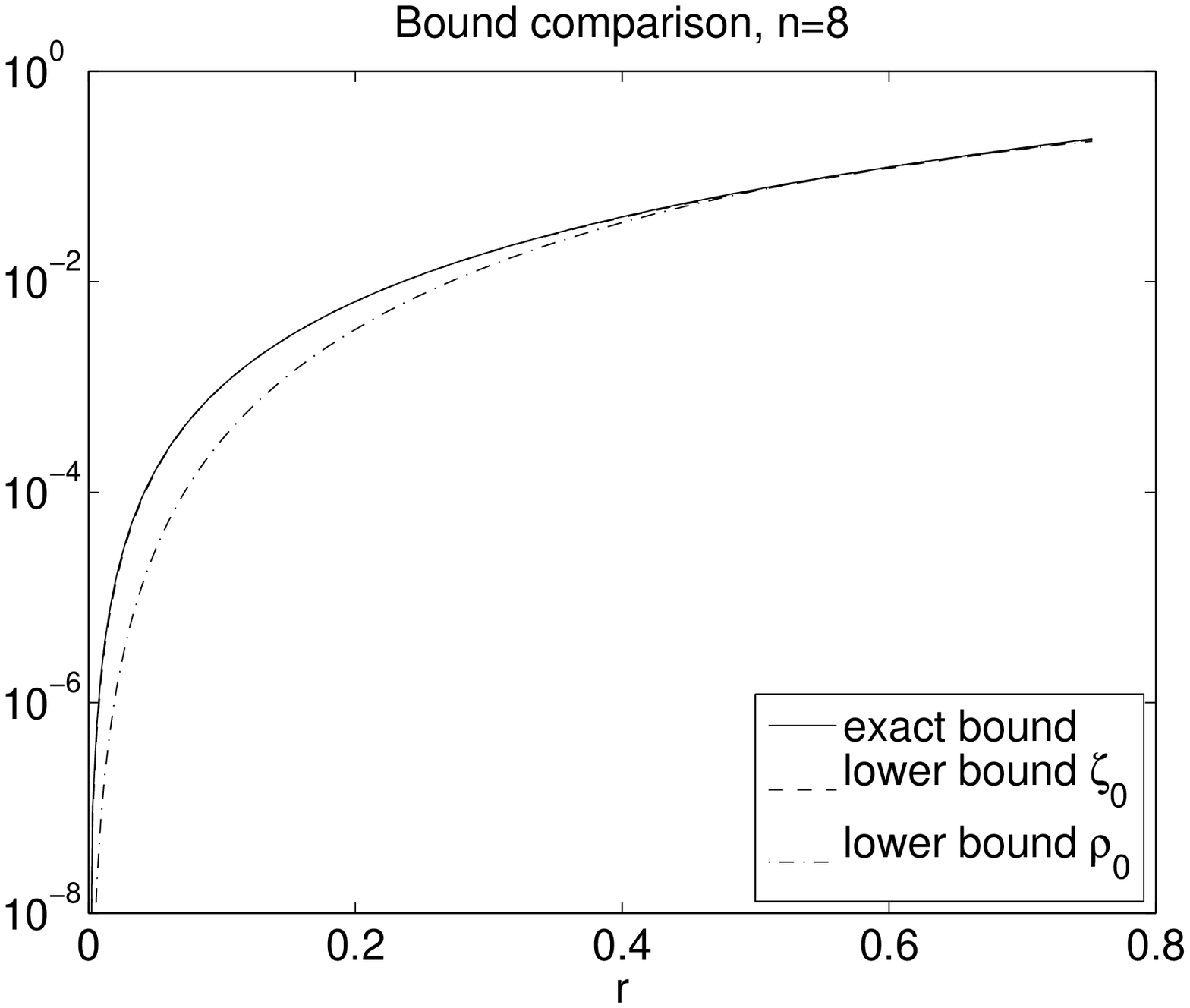}
    \hfill
    \includegraphics[width=.48\textwidth]{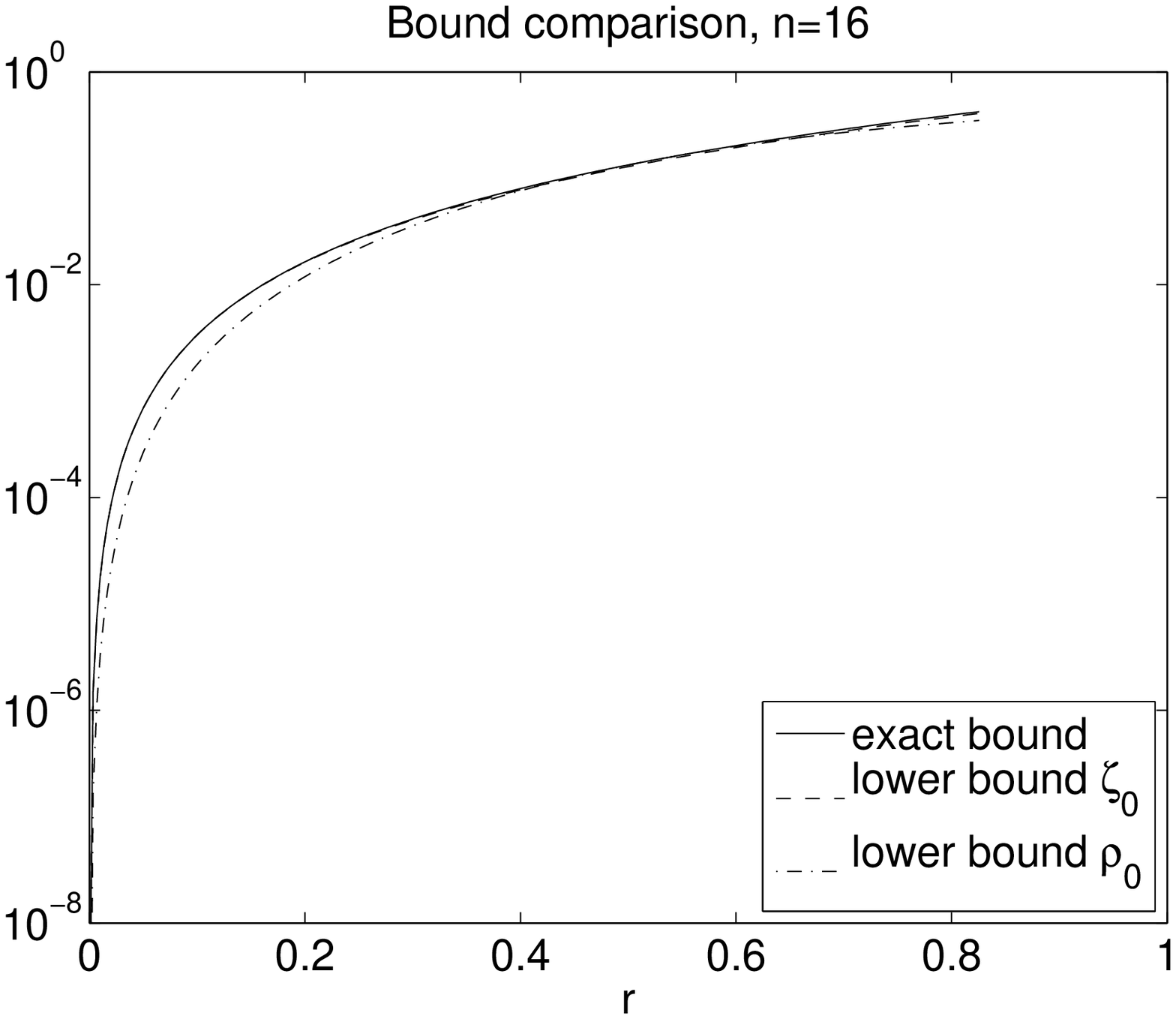}
\end{center}
    \caption{Comparison of the bounds on the eligible masses $\delta$ in
    the equation $S_{s, \delta}(w) = \conj{w}$ such that maximal lensing
    occurs.  For $n=8$ and $n=16$, the exact bound and the lower bound
implied by $\zeta_0$ are visually indistinguishable. Values below
$10^{-8}$ have been clipped.
\label{fig:bd_bound_cmp}}
\end{figure}

In~\cite[equation (7)]{BayerDyer2007}, another bound is derived as
follows.  Set $\rho_0 = s^{1 + 6/n}$, then the authors show that if $s
< 1/\sqrt{2}$ and
\begin{equation}
    \label{eqn:bdmax_rho}
    \delta < \tfrac{\rho_0^{n+2} - \rho_0^n + s^n \rho_0^2}{s^n + \rho_0^n} =
    s^{2 ( \frac{6}{n} + 1 )} - \tfrac{s^6}{1 + s^6},
\end{equation}
the equation $S_{s, \delta}(w) = \conj{w}$ has $5n$ solutions.
Analogous to Remark~\ref{rem:rho0_rhie}, one can show that the bound
is in fact valid for all $s \in (0, r_{\crit})$.

In Fig.~\ref{fig:bd_bound_cmp} we compare for $n=3,4,8$ and $16$ the
exact bound implied by Theorem~\ref{thm:maxlens_bd} with the
bounds~\eqref{eqn:bdmax_zeta} and~\eqref{eqn:bdmax_rho}.  We
see that both bounds are asymptotically close to the exact bound, and
that the bound based on $\zeta_0$ is slightly better than the bound
based on $\rho_0$.

Finally, we note that maximal lensing may occur even for radii $s >
\sqrt{1 + \delta} r_{\crit}$.  We give the following bound beyond which
maximal lensing is impossible.
\begin{cor}
    Let $\delta > 0$ and $s \ge \sqrt{1+\delta}
(\frac{n-2}{n+2})^{\frac{n-2}{2n}}$. Then $S_{s, \delta}(w) = \conj{w}$
has exactly $3n$ solutions.
\end{cor}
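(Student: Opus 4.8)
The plan is to reduce the statement to Corollary~\ref{cor:r_ub_rhie} by means of the parameter transformation $\Phi$ from~\eqref{eqn:phi_trafo}. Given $(s,\delta)$ with $\delta > 0$, set $(r,\eps) = \Phi^{-1}(s,\delta) = (\frac{s}{\sqrt{1+\delta}}, \frac{\delta}{1+\delta})$. Since $\delta > 0$ we have $\eps = \frac{\delta}{1+\delta} \in (0,1)$, so $\eps$ lies in the admissible range of Corollary~\ref{cor:r_ub_rhie}. Moreover, the hypothesis $s \ge \sqrt{1+\delta}(\frac{n-2}{n+2})^{\frac{n-2}{2n}}$ is, after dividing by $\sqrt{1+\delta}$, exactly the condition $r \ge (\frac{n-2}{n+2})^{\frac{n-2}{2n}}$ required by that corollary.

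First I would establish the solution correspondence. By the algebraic identity $R_{r,\eps}(z) = (1-\eps)(R_r(z) + \frac{\delta}{z})$ used in the proof of Proposition~\ref{prop:rhie_bd_trafo}, together with Lemma~\ref{lem:scaling}, the map $w = z\sqrt{1+\delta}$ is a bijection between the solution sets of $R_{r,\eps}(z) = \conj{z}$ and $S_{s,\delta}(w) = \conj{w}$. In particular the two equations have the same number of solutions. Applying Corollary~\ref{cor:r_ub_rhie} with the parameters $r,\eps$ just defined, the equation $R_{r,\eps}(z) = \conj{z}$ has exactly $3n$ solutions, and hence so does $S_{s,\delta}(w) = \conj{w}$.

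The one point requiring care---and the only genuine obstacle---is that Proposition~\ref{prop:rhie_bd_trafo} is stated under the hypothesis $r \in (0, r_{\crit})$, whereas here the induced radius $r = \frac{s}{\sqrt{1+\delta}}$ satisfies $r \ge (\frac{n-2}{n+2})^{\frac{n-2}{2n}}$ and may therefore exceed $r_{\crit}$. However, the restriction $r < r_{\crit}$ plays no role in the derivation of the equivalence: both the factorization $R_{r,\eps} = (1-\eps)(R_r + \frac{\delta}{z})$ and Lemma~\ref{lem:scaling} hold for every $r > 0$ and $\eps \in (0,1)$. Thus the equivalence of Proposition~\ref{prop:rhie_bd_trafo} remains valid for all $r > 0$, and the transfer argument goes through unchanged. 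Alternatively, one could bypass the transformation entirely and prove the claim directly, by writing $S_{s,\delta}(w) = \conj{w}$ in polar coordinates and repeating the monotonicity argument of Corollary~\ref{cor:r_ub_rhie} for the resulting polynomial; since the quantitative work has already been carried out in the Rhie setting, I expect the transfer route to be both shorter and more transparent.
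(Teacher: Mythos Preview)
Your proof is correct and follows essentially the same route as the paper: transform $(s,\delta)$ to $(r,\eps)$ via $\Phi^{-1}$, observe that the hypothesis on $s$ becomes $r \ge (\tfrac{n-2}{n+2})^{\frac{n-2}{2n}}$, and invoke Corollary~\ref{cor:r_ub_rhie} together with the solution correspondence of Proposition~\ref{prop:rhie_bd_trafo}. You are in fact more careful than the paper in noting that the restriction $r < r_{\crit}$ in the statement of Proposition~\ref{prop:rhie_bd_trafo} is irrelevant to the equivalence itself, which the paper's proof applies without comment.
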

\begin{proof}
Recall that $R_{r, \eps}(z) = \conj{z}$ has exactly $3n$ solutions if $r \geq
(\tfrac{n-2}{n+2})^{\frac{n-2}{2n}}$.  Since $s = \tfrac{r}{\sqrt{1-\eps}} =
\sqrt{1+\delta} r$, we see by Proposition~\ref{prop:rhie_bd_trafo} that
$S_{s,\delta}(w) = \conj{w}$ has exactly $3n$ solutions if $s \geq
\sqrt{1+\delta} (\tfrac{n-2}{n+2})^{\frac{n-2}{2n}}$.
\eop
\end{proof}

Bayer and Dyer~\cite{BayerDyer2007,BayerDyer2009} considered the
situation with three equal point masses arranged at the vertices of an
equilateral triangle and a fourth point mass located at its center, i.e.
the case $n=3$ in~\eqref{eqn:bd_perturb}.  As an application of their
bound~\eqref{eqn:bdmax_rho}, they consider a maximal lens with total
deflector mass of $3.5~\Msun$, and derive mass constraints on the
central mass.  Using our \emph{exact} bound on the central mass in
their application (implied by Theorem~\ref{thm:maxlens_bd}) shows
that the central mass is bounded in fact by $0.086~\Msun$ (instead of
$0.07~\Msun$ in~\cite{BayerDyer2007}).

\section{Conclusions and outlook}

We have given a complete characterization of maximal lensing
conditions for Rhie's point lens for radii up to the critical radius
$r_{\crit}$.  We have
presented two bounds on the maximal perturbation mass such that maximal lensing
is guaranteed. The first bound is sharp and computable, but given by the
roots of certain polynomials, while the other bound is explicitly parameterized
and almost as good as the sharp bound.

We have transferred the above mentioned complete characterization to
the lens model of Bayer and Dyer, thus extending the known range of radii for
which maximal lensing occurs.  We obtained an exact bound on the mass for their
model, as well as an approximate bound, which is explicitly given and
which improves the previously known one.  Our analysis also shows that
in the model of Bayer and Dyer
maximal lensing is guaranteed to happen even for radii beyond the critical
radius.

The methods applied in this paper are specific for the lens modeled by
equation~\eqref{eqn:unpert_rat_fun}. Recently we have studied general
perturbations of rational harmonic functions of the form
$f(z)-\conj{z}$, with $f(z)$ rational, by poles~\cite{SeteLuceLiesen2014}.
Our results obtained in~\cite{SeteLuceLiesen2014} may also be of interest in the
study of the gravitational lensing model considered here. Moreover, we
point out that modifications of this model by replacing the point
deflectors by spherically symmetric distributed masses, or allowing
external shear, have been considered, e.g.,
in~\cite{AnEvans2006,BayerDyerGiang2006,KhavinsonLundberg2011,PettersWitt1996}. 
Whether sharp parameter bounds analo\-gous to ours can be obtained in these
models remains a subject of further work.

\paragraph*{Ackowledgements}
Robert Luce's work is supported by Deutsche Forschungsgemeinschaft,
Cluster of Excellence ``UniCat''.

\begin{appendix}

\section{Auxiliary proofs}

\begin{lem}
\label{lem:zeta_eta}
Let $n \geq 3$ and $r \in (0, r_{\crit})$ be given.  Then
$\zeta(r) \coloneq \tfrac{n+6}{n+8} r^{ \frac{3n+1}{3n-6} }
< (\tfrac{n-2}{n+2})^{\frac{1}{2}}$
holds.
\end{lem}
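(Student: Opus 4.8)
The plan is to reduce the claim to a single worst-case value of $r$ by exploiting monotonicity, and then to verify a resulting inequality in the single variable $n$. Since the exponent $\tfrac{3n+1}{3n-6}$ is positive for every $n \ge 3$, the function $r \mapsto \zeta(r) = \tfrac{n+6}{n+8}\,r^{(3n+1)/(3n-6)}$ is strictly increasing on $(0,\infty)$. Hence for every $r \in (0, r_{\crit})$ we have $\zeta(r) < \zeta(r_{\crit})$, and it suffices to establish the non-strict bound $\zeta(r_{\crit}) \le (\tfrac{n-2}{n+2})^{1/2}$; the strict inequality claimed in the lemma then follows at once.

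First I would substitute the closed form of the critical radius. Directly from the definition of $r_{\crit}$ in the introduction one computes $r_{\crit}^n = (\tfrac{n-2}{n})^{(n-2)/2}\,\tfrac{2}{n}$. To remove both the fractional exponent $(3n+1)/(3n-6)$ and the $n$-th root implicit in $r_{\crit}$ in one stroke, I would raise the target inequality $\zeta(r_{\crit}) \le (\tfrac{n-2}{n+2})^{1/2}$ to the positive power $2n(3n-6)$. Because $\tfrac{3n+1}{3n-6}\cdot 2n(3n-6) = 2n(3n+1)$, the radius term becomes $(r_{\crit}^n)^{2(3n+1)}$, a clean power of the closed form above. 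After substitution the lemma reduces to a purely algebraic inequality in the integer $n$, namely a comparison of power-law factors of the form $(\tfrac{n+6}{n+8})^{2n(3n-6)}\,\big[(\tfrac{n-2}{n})^{(n-2)/2}\tfrac2n\big]^{2(3n+1)} \le (\tfrac{n-2}{n+2})^{n(3n-6)}$.

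The final step, proving this reduced inequality for all $n \ge 3$, is where I expect the real work to lie. I would take logarithms, write $L(n)$ for the difference of the two sides' logarithms, and show $L(n) \le 0$. The decisive observation is that the factor $(2/n)^{2(3n+1)}$ contributes $-(6n+2)\ln n + (6n+2)\ln 2$, whose term $-(6n+2)\ln n$ dominates every other contribution (all of which are $O(n)$); hence $L(n) \to -\infty$ and the inequality holds with increasing margin as $n$ grows. Making this domination quantitative---bounding the remaining logarithms by elementary estimates such as $\ln(1+t) \le t$---yields an explicit threshold $n_0$ beyond which $L(n) < 0$, after which the finitely many remaining cases $3 \le n \le n_0$ are checked by direct evaluation. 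The main obstacle is thus not conceptual but the careful, uniform logarithmic bookkeeping needed to cover the small- and moderate-$n$ regime, since the naive estimate $r_{\crit}^{(3n+1)/(3n-6)} < r_{\crit}$ (valid because the exponent exceeds $1$ and $r_{\crit} < 1$) is too weak: already at $n=3$ it yields only $\zeta(r_{\crit}) < \tfrac{9}{11} r_{\crit} \approx 0.60$, which exceeds the target $(\tfrac{1}{5})^{1/2} \approx 0.45$, so the large exponent must genuinely be exploited.
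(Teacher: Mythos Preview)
Your proposal is correct and follows essentially the same route as the paper: reduce to $r = r_{\crit}$ by monotonicity, substitute the closed form of $r_{\crit}$, obtain a power-law inequality in $n$ alone, handle large $n$ by an asymptotic estimate, and verify the finitely many small cases directly. The only tactical difference is that the paper first discards the factor $(\tfrac{n+6}{n+8})^2 < 1$ and then uses the single bound $(1+\tfrac{2}{n})^n < e^2$ in place of a general logarithmic expansion, arriving at the clean threshold $n \ge 6$ (with $n=3,4,5$ checked by hand).
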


\begin{proof}
Note that $\zeta(r)$ increases monotonically in $r$, so that
$\zeta(r) < (\tfrac{n-2}{n+2})^{\frac{1}{2}}$ for all $r \in (0, r_{\crit})$ is
equivalent to $\zeta(r_{\crit}) \leq (\tfrac{n-2}{n+2})^{\frac{1}{2}}$.  This
last inequality is equivalent to $(\tfrac{n+6}{n+8})^2 r_{\crit}^{
\frac{6n+2}{3n-6} } \leq \tfrac{n-2}{n+2}$.  Since $\tfrac{n+6}{n+8} < 1$ it
is sufficient to show $r_{\crit}^{6n+2} \leq (\tfrac{n-2}{n+2})^{3n-6}$.
Inserting the value of $r_{\crit}$ yields
$( \tfrac{n-2}{n} )^{3n+1} ( \tfrac{2}{n-2} )^{6+\frac{2}{n}} \leq (
\tfrac{n-2}{n+2} )^{3n-6}$, and after some algebraic manipulation one obtains
\begin{equation}
(1-\tfrac{2}{n}) (n-2)^{-\frac{2}{n}} 2^6 4^{\frac{1}{n}}
(1+\tfrac{2}{n})^{3n-6} \leq n^{6}. \label{eqn:n_bound_for_zeta}
\end{equation}
The left hand side can roughly be bounded as follows:
\begin{equation*}
(1-\tfrac{2}{n}) (n-2)^{-\frac{2}{n}} \cdot 2^6 \cdot 4^{\frac{1}{n}}
(1+\tfrac{2}{n})^{3n-6}
\leq 1 \cdot 1 \cdot 2^6 \cdot 4^{\frac{1}{3}} \big( (1+\tfrac{2}{n})^n \big)^3
\leq 4^{\frac{1}{3}} (2e)^6.
\end{equation*}
Now, $4^{\frac{1}{3}} (2e)^6 \leq n^6$ is equivalent to $n \geq 4^{\frac{1}{18}}
2 e \approx 5.8$, so \eqref{eqn:n_bound_for_zeta} holds for $n \geq 6$.
It is not difficult but tiresome to see by direct calculation that
\eqref{eqn:n_bound_for_zeta} is also valid for $n = 3, 4, 5$.
\eop
\end{proof}

\begin{thm}
    \label{thm:2}
Let $n \geq 3$, $\delta > 0$ and $s \in (0, \sqrt{1+\delta} r_{\crit})$ be
given and denote by $\xi_1 \defby \xi_1(\delta)$ be the smallest
positive root of the polynomial $(n+2) \xi^n - (1+\delta) n \xi^{n-2} +
2 s^n$.  Then the equation $S_{s,\delta}(w) = \conj{w}$ has $5n$ solution
if and only if
\begin{equation*}
    \delta < \tfrac{\xi_1^{n+2} - \xi_1^n + s^n \xi_1^2}{s^n + \xi_1^n}
    \bydef u(\delta).
\end{equation*}
If $\delta < u(\delta)$, the $5n$ solutions have the form $s_2
e^{i \frac{(2k+1)\pi}{n}}$, $s_3 e^{i \frac{(2k+1)\pi}{n}}$, $s_4 e^{i
\frac{(2k+1)\pi}{n}}$, and $s_1 e^{i \frac{2k\pi}{n}}$, $s_5 e^{i
\frac{2k\pi}{n}}$, $k=0,1,\dots,n-1$, where $0 < s_1 < \sqrt{\delta} <
s_2 < s_3 < s_4 < \sqrt{1+\delta} < s_5$.  If $\delta = u(\delta)$,
there exist $4n$ solutions, and only $3n$
solutions exist for $\delta > u(\delta)$.
\end{thm}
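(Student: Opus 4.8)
The plan is to mirror the proof of Theorem~\ref{thm:1}, adapting every step to the function $S_{s,\delta}$ and relying on the parameter transformation $\Phi$ from Proposition~\ref{prop:rhie_bd_trafo} to close the two places where the Rhie argument does not transfer verbatim. First I would substitute $w = \rho e^{i\varphi}$ with $\rho>0$ into $S_{s,\delta}(w)=\conj{w}$ and clear denominators; using $\abs{w}^2=\rho^2$ and $w^n=\rho^n e^{in\varphi}$ this reduces to the real relation $\rho^n((1+\delta)-\rho^2)e^{in\varphi} = s^n(\delta-\rho^2)$. Since the right-hand side is real and $s>0$, one forces $\rho\neq\sqrt{\delta}$, $\rho\neq\sqrt{1+\delta}$ and $e^{in\varphi}=\pm1$ (and $w=0$ is not a solution), splitting the count into the families $\varphi=\tfrac{2k\pi}{n}$ and $\varphi=\tfrac{(2k+1)\pi}{n}$.

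For $e^{in\varphi}=1$ the relation becomes $f_+(\rho)\defby \rho^{n+2}-(1+\delta)\rho^n - s^n\rho^2 + \delta s^n=0$. Descartes' rule allows at most two positive roots, and the evaluations $f_+(0)=\delta s^n>0$, $f_+(\sqrt{\delta})=-\delta^{n/2}<0$, $f_+(\sqrt{1+\delta})=-s^n<0$, together with $f_+(\rho)\to\infty$, pin down exactly one root $s_1\in(0,\sqrt{\delta})$ and one root $s_5\in(\sqrt{1+\delta},\infty)$. Sign consistency of the two factors shows both roots are genuine solutions and that none lies in $(\sqrt{\delta},\sqrt{1+\delta})$, so this family contributes $2n$ solutions for every admissible $(s,\delta)$.

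For $e^{in\varphi}=-1$ I obtain $f_-(\rho)\defby \rho^{n+2}-(1+\delta)\rho^n + s^n\rho^2 - \delta s^n=0$, and the opposite-sign condition on the two factors shows every positive root of $f_-$ automatically lies in $(\sqrt{\delta},\sqrt{1+\delta})$. The derivative factors as $f_-'(\rho)=\rho\,g(\rho)$ with $g(\rho)=(n+2)\rho^n-(1+\delta)n\rho^{n-2}+2s^n$, whose smallest positive root is the $\xi_1$ of the statement. Differentiating once more locates the unique minimum of $g$ at $\rho_0=((1+\delta)(n-2)/(n+2))^{1/2}$, where $g(\rho_0)=2s^n-2(1+\delta)\rho_0^{n-2}$; after pulling out $(1+\delta)^{n/2}$ the inequality $g(\rho_0)<0$ is exactly the hypothesis $s<\sqrt{1+\delta}\,r_{\crit}$, the same elementary estimate as in the Rhie case. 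Hence $g$, and so $f_-$, has exactly two critical points $\xi_1<\xi_2$ (a local maximum and a local minimum). From $g(\sqrt{1+\delta})=2(1+\delta)^{n/2}+2s^n>0$ I get $\xi_2<\sqrt{1+\delta}$, and with $f_-(\sqrt{\delta})=-\delta^{n/2}<0$, $f_-(\sqrt{1+\delta})=s^n>0$ this already forces one root $s_4\in(\xi_2,\sqrt{1+\delta})$ irrespective of $\delta$.

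Writing $f_-(\rho)=\rho^2 p(\rho)-\delta(\rho^n+s^n)$ with $p(\rho)=\rho^n-\rho^{n-2}+s^n$, the existence of the remaining two roots $s_2,s_3$ is equivalent to $f_-(\xi_1)>0$, which rearranges precisely to $\delta<u(\delta)=(\xi_1^{n+2}-\xi_1^n+s^n\xi_1^2)/(s^n+\xi_1^n)$; the double-root case $\delta=u(\delta)$ and the case $\delta>u(\delta)$ then yield $4n$ and $3n$ solutions via the count $2n+n\cdot\#\{\text{distinct positive roots of }f_-\}$. The main obstacle, and the reason the bound is only implicit, is that $g$ depends on $\delta$, so $\xi_1=\xi_1(\delta)$ and $u(\delta)$ are self-referential; moreover $p$ now carries $s^n$ rather than $r^n$ and need not have real roots once $s\ge r_{\crit}$, so the Rolle argument of Theorem~\ref{thm:1} (which placed the critical points between the roots of $p$) is unavailable. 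I would settle both points through $\Phi$: with $r=s/\sqrt{1+\delta}<r_{\crit}$, $\eps=\delta/(1+\delta)\in(0,1)$ and $\rho=\sigma\sqrt{1+\delta}$ one checks $f_-(\rho)=(1+\delta)^{(n+2)/2}f_-^{\mathrm{Rhie}}(\sigma)$ and $\xi_1=\eta_1\sqrt{1+\delta}$, where $\eta_1$ is Rhie's critical point. The ordering $0<s_1<\sqrt{\delta}<s_2<s_3<s_4<\sqrt{1+\delta}<s_5$ and the equivalence $\delta<u(\delta)\Leftrightarrow\eps<\eps_*$ are then inherited from Theorem~\ref{thm:1}, the latter via the elementary identity $(1-\eta_1^2)(r^n+\eta_1^n)=r^n(1-\eps_*)$, which turns $\tfrac{\delta}{1+\delta}<\eps_*$ into $\delta<u(\delta)$.
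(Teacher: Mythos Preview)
Your argument is correct, but it takes a detour the paper avoids. Up through the analysis of $f_+$ and the location of the two critical points $\xi_1<\xi_2$ of $f_-$, you and the paper proceed identically. You then factor $f_-(\rho)=\rho^2 p(\rho)-\delta(\rho^n+s^n)$ with $p(\rho)=\rho^n-\rho^{n-2}+s^n$, observe that this $p$ need not have real roots once $s\ge r_{\crit}$, declare the Rolle argument unavailable, and fall back on the scaling $\Phi$ to import the conclusion from Theorem~\ref{thm:1}. That rescue is valid (your identity $f_-(\rho)=(1+\delta)^{(n+2)/2}f_-^{\mathrm{Rhie}}(\sigma)$ and the check that $\delta<u(\delta)\Leftrightarrow \eps<\eps_\ast$ are both correct), and once you invoke $\Phi$ the earlier direct analysis of $f_-$ becomes largely redundant: the whole theorem already follows from Theorem~\ref{thm:1} via Proposition~\ref{prop:rhie_bd_trafo}, which is exactly how the paper proves Theorem~\ref{thm:maxlens_bd}.

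The paper's proof of Theorem~\ref{thm:2}, by contrast, is fully self-contained and never appeals to $\Phi$. The point you missed is that the ``obstacle'' is an artifact of your factorisation. The paper writes
\[
f_-(\rho)=\rho^2\,\tilde p(\rho)-\delta s^n,\qquad \tilde p(\rho)=\rho^n-(1+\delta)\rho^{n-2}+s^n,
\]
and this $\tilde p$ has two positive roots $z_1<z_2$ precisely when $s<\sqrt{1+\delta}\,r_{\crit}$, i.e.\ exactly under the hypothesis. Then $f_-(0)=f_-(z_1)=f_-(z_2)=-\delta s^n$, so Rolle pins down $0<\xi_1<z_1<\xi_2<z_2$, the root $s_4\in(z_2,\sqrt{1+\delta})$ is forced by $f_-(z_2)<0<f_-(\sqrt{1+\delta})$, and ``two further roots $\Leftrightarrow f_-(\xi_1)>0$'' follows because $f_-(z_1)<0$ guarantees the descent from $f_-(\xi_1)$ crosses zero. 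Your direct argument left a small gap here: asserting $s_4\in(\xi_2,\sqrt{1+\delta})$ and the equivalence with $f_-(\xi_1)>0$ really needs $f_-(\xi_2)<0$, which your factorisation does not supply but the paper's does. In short, both routes work; the paper's is cleaner because the right choice of $p$ makes the Rolle step go through verbatim, while yours trades that for an appeal to the already-proved Theorem~\ref{thm:1}.
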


\begin{proof}
Obviously, $w=0$ does not solve $S_{s,\delta}(w) = \conj{w}$. As in the
proof of Proposition~\ref{prop:num_sol}, we now write $w = \rho e^{i
\varphi}$, where $\rho > 0$ and $\varphi \in \R$. Then $S_{s,\delta}(w) =
\conj{w}$ can be written as
\begin{equation}
\label{eqn:comp_rho_phi_pert_bd}
    (1 + \delta - \rho^2) \rho^n e^{i n \varphi} = (\delta - \rho^2) s^n.
\end{equation}
Since the right hand side of \eqref{eqn:comp_rho_phi_pert_bd} is real
and $s>0$, it is necessary that $\rho \neq \sqrt{1+\delta}$, $\rho\neq
\sqrt{\delta}$, and that either $e^{i n \varphi} = 1$ or $e^{i n
\varphi} = -1$.

If $e^{i n \varphi} = 1$, then $\varphi = \tfrac{2k\pi}{n}$ for some
$k\in\{0,1,\dots,n-1\}$. Furthermore, $1 + \delta - \rho^2$ and $\delta -
\rho^2$ must have same sign, hence either $0 < \rho < \sqrt{\delta}$ or
$\sqrt{1 + \delta} < \rho$, and \eqref{eqn:comp_rho_phi_pert_bd} can be
written as $f_+(\rho) \coloneq \rho^{n+2} - (1 + \delta) \rho^n - s^n
\rho^2 + \delta s^n = 0$.

By Descartes' rule of signs $f_+(\rho)$ has either zero or two
positive roots.  From
\begin{equation*}
    f_+(0) = \delta s^n > 0, \quad
    f_+(\sqrt{\delta}) = - \sqrt{\delta}^n < 0, \quad
    f_+(\sqrt{1+\delta}) = - s^n < 0,
\end{equation*}
and $f_+(\rho)\rightarrow\infty$ for $\rho\rightarrow\infty$, we see
that $f_+(\rho)$ indeed has one root $s_1\in (0, \sqrt{\delta})$ and one
root $s_5\in (\sqrt{1+\delta}, \infty)$.  Consequently, for all $s>0$
and all $\delta > 0$ there exist $2n$ solutions $s_1 e^{i \frac{2 k
\pi}{n}}$ and $s_5 e^{i \frac{2 k \pi}{n}}$, $k=0,1,\dots,n-1$, of the
equation $S_{s,\delta}(w)=\conj{w}$.

Next consider the case $e^{i n \varphi} = -1$, then $\varphi =
\tfrac{(2k+1)\pi}{n}$ for some $k\in\{0,1,\dots,n-1\}$. Here $1 + \delta
- \rho^2$ and $\delta - \rho^2$ must have opposite signs. Hence
$\sqrt{\delta} < \rho < \sqrt{1 + \delta}$ is necessary, and
\eqref{eqn:comp_rho_phi_pert_bd} can be written as $f_-(\rho) \coloneq
\rho^{n+2} - (1 + \delta) \rho^n + s^n \rho^2 - \delta s^n = 0.$ The polynomial
$f_-(\rho)$ has either one or three positive roots.
We will derive necessary and sufficient conditions so that $f_-(\rho)$
has three distinct positive roots in the interval
$(\sqrt{\delta},\sqrt{1+\delta})$.

The positive roots of the derivative $f_{-}'(\rho) = \rho \big( (n+2)
\rho^n - n (1+\delta) \rho^{n-2} + 2 s^n \big)$ are equal to the
positive roots of $g(\rho) \coloneq (n+2) \rho^n - n (1+\delta)
\rho^{n-2} + 2 s^n.$ From
\begin{equation*}
    g'(\rho) = n \rho^{n-3} \big( (n+2) \rho^2 - (n-2) (1+\delta) \big),
\end{equation*}
we see that the unique positive root of $g'(\rho)$ is $\eta \coloneq
\sqrt{1+\delta} (\frac{n-2}{n+2})^{\frac{1}{2}} \in (0,\sqrt{1+\delta})$.
One can calculate that for all $s\in (0,\sqrt{1+\delta}r_{\crit})$ we
have $g(\eta) < 0$.  Together with $g(0) = 2 s^n > 0$ and
$g(\sqrt{1+\delta}) = 2 ( \sqrt{1+\delta}^n + s^n) > 0$ this shows that
$g(\rho)$ and thus $f_-'(\rho)$ have exactly two positive roots, say
$\xi_1$ and $\xi_2$, with $0 < \xi_1 < \eta < \xi_2 < \sqrt{1 +
\delta}$.  Note that $\xi_1$ and $\xi_2$ depend on $\delta$.

Let us write $f_{-}(\rho) = \rho^2 p(\rho) - \delta s^n$, where $p(\rho)
\coloneq \rho^n - (1 + \delta) \rho^{n-2} + s^n$.  It is easy to see
that $p(\rho)$ has two positive roots if and only if $s \in (0,
\sqrt{1+\delta} r_{\crit})$.  These zeros $z_1, z_2$ satisfy $0 < z_1 <
\sqrt{1+\delta} (\frac{n-2}{n})^{\frac{1}{2}} < z_2 < \sqrt{1+\delta}$.

Since $f_{-}(0) = f_{-}(z_1) = f_{-}(z_2) = - \delta s^n$, the mean
value theorem implies that the two roots of $f_{-}'(\rho)$ satisfy $0
< \xi_1 < z_1 < \xi_2 < z_2$.  From $f_{-}(z_2) < 0 <
f_{-}(\sqrt{1+\delta})$ we then see that $f_-(\rho)$ has exactly one
root $s_4\in (z_2, \sqrt{1+\delta})$.

Further, $f_{-}(0) = f_{-}(z_1) < 0$ implies that $f_-(\rho)$ has two
more (distinct) roots if and only if $f_{-}(\xi_1) > 0$, or,
equivalently,
\begin{equation*}
    \delta <
    \tfrac{ \xi_1^{n+2} - \xi_1^n + \xi_1^2 s^n}{s^n + \xi_1^n} = u(\delta).
\end{equation*}

If the above inequality is satisfied, the function $f_-(\rho)$ has
two more distinct roots $s_2, s_3$ with $\sqrt{\delta} < s_2 < \xi_1 <
s_3 < z_1$.  Hence, there exist $3n$ solutions $s_2 e^{ i \frac{(2k+1)
\pi}{n}}$, $s_3 e^{ i \frac{(2k+1) \pi}{n}}$ and $s_4 e^{ i
\frac{(2k+1) \pi}{n}}$, $k=0,1,\dots,n-1$, of the equation
$S_{s,\delta}(w) = \conj{w}$.

On the other hand, if $\delta = u(\delta)$, $\xi_1$ is a (double) zero of
$f_-(\rho)$.  Then $S_{s,\delta}(w) = \conj{w}$ has the $2n$ solutions
$\xi_1 e^{i \frac{(2k+1) \pi}{n}}$ and $s_4 e^{i \frac{(2k+1)
\pi}{n}}$, $k = 0, 1, \ldots, n-1$ in addition to the $2n$ solutions
corresponding to $s_1$ and $s_5$.  Finally, if $\delta > u(\delta)$,
then only the $n$ solutions $s_4 e^{ i \frac{(2k+1)
\pi}{n}}$, $k=0,1,\dots,n-1$, of $S_{s,\delta}(w) = \conj{w}$ occur in
addition to the $2n$ roots corresponding to $s_1$ and $s_5$.  \eop
\end{proof}


\end{appendix}

\bibliographystyle{plain}
\bibliography{proof_rhie}

\end{document}